\documentclass{article}
\usepackage[T1]{fontenc}
\usepackage{graphicx} 

\usepackage[english]{babel}
\usepackage[utf8]{inputenc}
\usepackage{csquotes}
\usepackage{cite}

\usepackage{endnotes}
\usepackage{authblk}

\usepackage{graphicx}
\usepackage[a4paper]{geometry}
\usepackage[pdfborder={0 0 0}]{hyperref}
\usepackage[utf8]{inputenc}
\usepackage{dsfont, amsmath, amsfonts, amssymb, amsthm, units}
\usepackage{enumerate}
\usepackage{tikz-cd}
\usepackage{bbold}
\usepackage{hyperref,bookmark}
\usepackage{slashed}
\usepackage{mathtools}
\usepackage{palatino}
\usepackage{euler}
\usepackage{verbatim}

\DeclarePairedDelimiterX\braket[2]{\langle}{\rangle}{#1\,\delimsize\vert\,\mathopen{}#2}

\newcommand{\R}{\mathbb{R}}

\newcommand{\norm}[1]{\left\lVert#1\right\rVert}

\theoremstyle{definition}
\newtheorem{definition}{Definition}[section]
\newtheorem{example}[definition]{Example}
\newtheorem{remark}[definition]{Remark}
\newtheorem{corollary}[definition]{Corollary}

\newtheorem{lemma}[definition]{Lemma}
\newtheorem{proposition}[definition]{Proposition}

\title{Boundaries in the Instantaneous Formulation of Field Theories} 
\author[1,*]{Silvester G.A. Borsboom}
\affil[1]{\normalsize Institute for Mathematics, Astrophysics and Particle Physics and Radboud Center for Natural Philosophy, Radboud University Nijmegen}
\affil[*]{silvester.borsboom@ru.nl}

\date{\today}

\begin{document}

\maketitle

\begin{abstract}

We study boundary conditions in GiMmsy's covariant and instantaneous formulations of classical field theories and show that the instantaneous state space in the presence of a constant Dirichlet boundary condition is a tangent bundle to the configuration space of fields satisfying said condition. We then study the instantaneous state space when only the velocity of the field is required to vanish at the boundary and show that this results in a sector structure, where each sector is a tangent bundle labeled by the configuration at the boundary. Taking the Legendre transform of this sectored state space yields a sectored phase space with leafwise canonical Poisson structures. We apply this to Yang-Mills theory with spatial boundary conditions and relate our results to flux superselection sectors. The sector-moving gauge transformations are not Hamiltonian because of the lack of a boundary momentum, prompting us to propose a novel definition of the asymptotic or boundary symmetry group as the quotient of the boundary-preserving Hamiltonian transformations by the trivial ones. The physical boundary symmetry group of electromagnetism is then shown to be a copy of the global gauge group even when all sectors are considered simultaneously. Conditions are discussed under which the same holds for non-Abelian Yang-Mills theory.

\end{abstract}

\section{Introduction}\label{intro}

Boundaries play an increasingly prominent role in the study of classical field theories because of their highly non-trivial interaction with localizable symmetry groups \cite{Regge:1974zd}. By the Vanishing Theorem, also known as Noether's second theorem, (time-)localizable symmetries lead to constraints in the Hamiltonian formulation of a field theory \cite{binz,Gotay2006MomentumMA}. Under suitable regularity and well-posedness assumptions, the final constraint set is contained in the zero locus of the instantaneous energy-momentum map. If, in addition, the gauge group is full and all secondary constraints are first class, this inclusion becomes an equality \cite[Theorem 10B.1]{Gotay2006MomentumMA}. The infinitesimal gauge directions lie in the null distribution of the symplectic form pulled back to the final constraint set, meaning that these symmetry transformations are unphysical, and coincide with the unphysical transformations when the gauge group is full.

But in the presence of boundaries, gauge transformations acting non-trivially at the boundary may instead acquire physical significance: on the bulk constraint surface, their momentum may still reduce to a non-vanishing boundary flux. This phenomenon is well studied in the canonical Hamiltonian formulation \cite{rielloHamiltonianGaugeTheory2024,rielloNullHamiltonianYang2025}. However, a well-defined classical field theory needs a full instantaneous formulation, including instantaneous Lagrangian and Hamiltonian formulations, as well as a well-defined instantaneous Legendre transform between them and a 3+1 split from both the Lagrangian and Hamiltonian covariant formulations. 

The covariant and instantaneous formulations of classical field theories have been developed by GiMmsy \cite{gotayMomentumMapsClassical2004,gotayMomentumMapsClassical2004a,Gotay2006MomentumMA}.\footnote{GiMmsy stands for Gotay, Isenberg, Marsden, Montgomery, Sniatycki en Yasskin, with the names of the ``main protagonists'' capitalized \cite{blohmann}.} In this paper we extend the GiMmsy framework to develop a rigorous foundation for the study of boundary conditions in classical field theories, in particular in the instantaneous formulation. This allows us to understand how boundaries interact with 3+1 decompositions, the Legendre transform, Hamiltonian constraints, momentum maps and boundary fluxes.

We begin in Section \ref{sec:instantaneousformulation} with a recap of GiMmsy's framework. In Section \ref{sec:BCs} we then consider various boundary conditions in this framework. This leads us to a study of the instantaneous Legendre transform in the presence of boundary conditions in Section \ref{sec:legendre}. Finally we apply our results to Yang-Mills theory in Section \ref{sec:YM}, extending results in \cite{Borsboom:2025agn}.

\section{The instantaneous formulation of field theories}\label{sec:instantaneousformulation}

In the instantaneous analysis of field theories, developed by GiMmsy \cite{binz,gotayMomentumMapsClassical2004,gotayMomentumMapsClassical2004a}, one moves from the covariant to the instantaneous theory by implementing a 3+1 split of spacetime and defining the associated \textit{instantaneous Lagrangian} and \textit{instantaneous Legendre transform}. We will now summarize this framework.

Let $M$ denote 4-dimensional spacetime and let $\pi_Y\colon Y\to M$ be the \textit{field bundle}, sections of which are the fields of the theory (e.g. the affine connection bundle for Yang-Mills theory on a principal $G$-bundle $P$, which becomes $Y=T^*M\otimes\mathfrak{g}$ when $P$ is trivalized). The covariant configuration space is denoted $Q:=\Gamma(Y)$. A $3+1$ decomposition is performed by first choosing a slicing of $M$ into Cauchy surfaces $\Sigma_t\cong\Sigma$, i.e. a diffeomorphism $\Sigma\times \R\to M$ with generator $\zeta_M\in\mathfrak{X}(M)$ which is transverse to every $\Sigma_t$. A \textit{compatible slicing} of the field bundle $Y$ is a bundle $Y_\Sigma\to \Sigma$ and a diffeomorphism $Y_\Sigma\times \R\to Y$ such that the diagram

$$\begin{array}{ccc}
Y_{\Sigma} \times \mathbb{R} & \longrightarrow & Y \\
\downarrow & & \downarrow \\
\Sigma\times \mathbb{R} & \longrightarrow & M
\end{array}$$
commutes, where the vertical arrows are the bundle projections \cite{gotayMomentumMapsClassical2004a}. The generator of this diffeomorphism is denoted by $\zeta_Y\in \mathfrak{X}(Y)$.

We assume that the Lagrangian depends on first derivatives only. Then a
\textit{covariant Lagrangian density} is a smooth bundle map $\mathfrak{L}\colon J^1Y\to \Lambda^4(M)$, where $ J^1Y$ denotes the first jet bundle of $Y$. This density can be integrated over $M$ to obtain the action:
\begin{align*}
    S(\varphi)=\int_ M\mathfrak{L}(j^1(\varphi)),
\end{align*}
where $j^1\colon\Gamma(Y)\to \Gamma(J^1Y)$ denotes the first jet prolongation. Note that the slicing $\zeta_Y$ induces a slicing of $J^1Y$ through the jet prolongation. Throughout we assume a \textit{Lagrangian} slicing, i.e. such that $\mathfrak{L}$ is equivariant with respect to the
one-parameter groups of automorphisms associated to the induced slicings of $J^1 Y$ and $\Lambda^4(M)$ \cite{gotayMomentumMapsClassical2004a}.

But the action can be written not only as a spacetime integral of a covariant Lagrangian density, but also as a time integral over an instantaneous Lagrangian, in the same way as for standard non-covariant classical mechanics. To obtain the instantaneous Lagrangian one first performs a 3+1 split of the field $\varphi\in\Gamma(Y)$ into
\begin{align*}
    \varphi_t=\varphi|_{\Sigma_t},\;\;\;\;\;\dot{\varphi}_t:=\text{\pounds}_{\zeta}\varphi|_{\Sigma_t}=\left(T\varphi\circ\zeta_M-\zeta_Y\circ\varphi\right)|_{\Sigma_t}.
\end{align*}
We think of $\dot{\varphi}_t$ as the time-derivative of $\varphi$ at time $t$, and it is vertical: $\dot{\varphi}_t\in \Gamma(\varphi_t^*VY_t)$, where $VY_t\to Y_t$ is the vertical bundle $VY_t=\text{ker}(T\pi_t)\subset TY_t$, with $\pi_t=\pi_Y|_{Y_t}\colon Y_t\to\Sigma_t$ the instantaneous field projection. Thus $\varphi_t^*VY_t$ is a bundle over $\Sigma_t$.

This field decomposition yields the \textit{jet decomposition map} $\beta_\zeta\colon (J^1Y)_t\to J^1(Y_t)\times_{Y_t} VY_t$ \cite{gotayMomentumMapsClassical2004a}:
\begin{align*}
    \beta_\zeta(j^1\varphi(x))=(j^1\varphi_t(x),\dot{\varphi}_t(x)).
\end{align*}
Here $(J^1Y)_t$ denotes the restriction of $J^1Y$ over $\Sigma_t$ and $Y_t$ the restriction of $Y$ over $\Sigma_t$. We obtain vertical tangent vectors because the time-derivative yields only variations within the fibers of the field bundle, not within space. 

The jet decomposition map is an affine-bundle isomorphism and its inverse is called the \textit{jet reconstruction map} \cite{gotayMomentumMapsClassical2004a}. Clearly $\beta_\zeta$ can be extended to a map on holonomic sections in $\Gamma((J^1Y)_t)$:
\begin{align*}
    \beta_\zeta(j^1\varphi\circ i_t)=(j^1\varphi_t,\dot{\varphi}_t),
\end{align*}
where $i_t\colon \Sigma_t\to M$ denotes the inclusion of space. The following result explains how the instantaneous state space arises as a tangent bundle \cite[Corollary 6.2]{gotayMomentumMapsClassical2004a}.
\begin{proposition}\label{prop:crucialisom}
    Let $Q_t=\Gamma(Y_t)$ denote the instantaneous configuration space at time $t$. Then $\beta_\zeta$ induces an isomorphism $(j^1Q)_t\cong TQ_t$, where $(j^1Q)_t=\{j^1\varphi\circ i_t:\varphi\in\Gamma(Y)\}$ is the collection of holonomic sections of $J^1Y\to M$ restricted to $\Sigma_t$. 
\end{proposition}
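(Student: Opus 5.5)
The plan is to define the map explicitly and check that it is a bijection (and, in the infinite-dimensional setting, smooth with smooth inverse). Recall that a tangent vector to $Q_t=\Gamma(Y_t)$ at $\varphi_t$ is a section $\delta\varphi\in\Gamma(\varphi_t^*VY_t)$. Concretely, it is a curve of sections through $\varphi_t$ differentiated pointwise, hence a vertical vector field along $\varphi_t$. So $TQ_t=\{(\varphi_t,\dot\varphi_t):\varphi_t\in\Gamma(Y_t),\ \dot\varphi_t\in\Gamma(\varphi_t^*VY_t)\}$. The candidate isomorphism is
\begin{align*}
\Phi\colon (j^1Q)_t\to TQ_t,\qquad j^1\varphi\circ i_t\mapsto \bigl(\varphi_t,\dot\varphi_t\bigr),
\end{align*}
obtained from $\beta_\zeta(j^1\varphi\circ i_t)=(j^1\varphi_t,\dot\varphi_t)$ by forgetting the redundant prolongation $j^1\varphi_t$. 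It is redundant because $j^1\varphi_t$ is determined by $\varphi_t$.

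\textbf{Well-definedness.} $\Phi$ depends only on the section $j^1\varphi\circ i_t$, not on the chosen extension $\varphi$. Indeed, $\varphi_t$ is the projection of $j^1\varphi\circ i_t$ to $Y_t$. Also $\dot\varphi_t=(T\varphi\circ\zeta_M-\zeta_Y\circ\varphi)|_{\Sigma_t}$ uses only the first jet of $\varphi$ along $\Sigma_t$: $T\varphi$ at points of $\Sigma_t$ is exactly the jet datum. That $\dot\varphi_t$ is vertical follows from the commuting slicing diagram, since $\zeta_Y$ projects to $\zeta_M$. Hence $\Phi=\mathrm{pr}\circ\beta_\zeta$ is well defined.

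\textbf{Injectivity.} This follows from $\beta_\zeta$ being a fiberwise affine isomorphism $(J^1Y)_t\to J^1(Y_t)\times_{Y_t}VY_t$. If $\Phi(j^1\varphi\circ i_t)=\Phi(j^1\psi\circ i_t)$, then $\varphi_t=\psi_t$, so $j^1\varphi_t=j^1\psi_t$, and $\dot\varphi_t=\dot\psi_t$. Applying $\beta_\zeta^{-1}$ pointwise gives equal sections.

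\textbf{Surjectivity.} This is the main step. Given $(\varphi_t,v)$ with $v\in\Gamma(\varphi_t^*VY_t)$, we must produce a global section $\varphi\in\Gamma(Y)$ with $\varphi|_{\Sigma_t}=\varphi_t$ and $\text{\pounds}_\zeta\varphi|_{\Sigma_t}=v$. We use the slicing $Y_\Sigma\times\R\cong Y$ to identify sections of $Y$ with time-dependent sections $s\mapsto\phi_s\in\Gamma(Y_\Sigma)$. In this trivialization $\zeta_Y=\partial_s$ and $\zeta_M=\partial_s$, so $\dot\varphi_t=\tfrac{d}{ds}\phi_s|_{s=t}$, a vertical vector in $Y_\Sigma$. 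Now choose a fiberwise exponential-type map on $Y_\Sigma$, namely a smooth map $E\colon VY_\Sigma\to Y_\Sigma$ over the base with $E(0_y)=y$ and $T E|_{0}$ the identity on vertical vectors. This can be built from a connection or a fiber metric on $VY_\Sigma$ (for vector or affine bundles, simply $E(y,w)=y+w$). Then set
\begin{align*}
\phi_s=E\bigl(\chi(s-t)\,(s-t)\,v\bigr),
\end{align*}
where $\chi$ is a bump function equal to $1$ near $0$. We may also need to scale $v$ by a cutoff depending on the injectivity radius, but with compact support in $s$ this causes no trouble for any $s$-independent global definition. This $\phi$ has the prescribed value and derivative at $s=t$.

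The main obstacle is making this construction global and smooth when $Y_\Sigma$ has no linear structure and $E$ is only defined near the zero section. I would handle it by composing with a cutoff in the time direction: since $\varphi_t(\Sigma)$ is a section, a tubular-neighborhood argument gives $E$ defined on a neighborhood of the zero section along $\varphi_t$, and $\chi$ keeps $\phi_s$ inside it. Finally, smoothness of $\Phi$ and $\Phi^{-1}$ in the Fréchet or manifold-of-sections sense follows because both are induced pointwise by the smooth bundle maps $\beta_\zeta$ and $\beta_\zeta^{-1}$, via the $\Omega$-lemma.
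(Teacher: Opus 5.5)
Your argument is correct. It uses the same map as the paper: $\beta_\zeta$ followed by discarding the redundant prolongation $j^1\varphi_t$, with injectivity inherited from $\beta_\zeta$ being a fibrewise isomorphism. Where you differ is the converse direction.

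The paper's proof spends its effort on the identification $T_{\varphi_t}Q_t\cong\Gamma(\varphi_t^*VY_t)$, showing that the pointwise derivative of a curve $\gamma$ of sections is vertical because $\pi_t(\gamma(s)(x))=x$. It then simply invokes jet reconstruction $\beta_\zeta^{-1}$ for the inverse. Taken literally, that only produces a section of $(J^1Y)_t$ over $\Sigma_t$. Two things are left implicit: that this section has the form $j^1\varphi\circ i_t$ for a history $\varphi\in\Gamma(Y)$, and that every $v\in\Gamma(\varphi_t^*VY_t)$ is the velocity of a curve in $Q_t$.

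You take the identification of tangent vectors as standard and instead prove exactly this extension step. The curve $s\mapsto E((s-t)v)$ realizes $v$, and reading it through the slicing gives the history. The paper uses the same device later, in the converse half of Proposition \ref{beautifulprop}, namely $\varphi_{t+s}=F^Y_s\circ\gamma(s)\circ(F^M_s)^{-1}$. The differences are that your $E$ supplies $\gamma$ explicitly, and your time cutoff makes the history global rather than defined only near $\Sigma_t$. That matters, since $(j^1Q)_t$ is defined via global sections. So your route is more self-contained, and the $\Omega$-lemma remark adds smoothness, which the paper does not address. The paper's route is shorter because it treats curve realization and extension as routine.

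One detail needs tidying. With an $x$-independent bump $\chi$, the vector $\chi(s-t)(s-t)v(x)$ is bounded only by a constant times $|v(x)|$. On non-compact $\Sigma$ it therefore need not stay in the neighbourhood of the zero section where $E$ is defined. Rescaling $v$ itself would not help, because it changes the velocity at $s=t$. Instead, make the cutoff depend on the point: $\phi_s(x)=E\bigl(\chi(\rho(x)(s-t))\,(s-t)\,v(x)\bigr)$, with $\rho>0$ smooth and large wherever $|v|$ is large compared with the width of that neighbourhood. Since $\chi\equiv1$ near $0$, the value and derivative at $s=t$ are unchanged. Your construction works as written for compact $\Sigma$, such as $\hat\Sigma$ in Section \ref{sec:YM}, and for affine $Y_\Sigma$ with $E(y,w)=y+w$.
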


\begin{proof}
This follows from the identification $T_{\varphi_t}Q_t
\cong
\Gamma(\varphi_t^*VY_t)$, given by
\[
v_{\varphi_t}=\dot\gamma(0)\mapsto
\left(x\mapsto \left.\frac{d}{ds}\right|_{s=0}\gamma(s)(x)\right),
\]
where \(\gamma\) is any smooth curve in \(Q_t\) with \(\gamma(0)=\varphi_t\). This is a section of \(\varphi_t^*VY_t\), because for every \(s\) we have $\pi_t(\gamma(s)(x))=x$. The spatial jet is given by \(\varphi_t\in Q_t\), while the velocity (in $\Gamma(\varphi_t^*VY_t)$) determines an element of \(T_{\varphi_t}Q_t\) by this identification. Conversely, jet reconstruction associates a unique jet along \(\Sigma_t\) to this data.
\end{proof}

\begin{remark}
For a holonomic section restricted to $\Sigma_t$, we need no other information for its spatial derivatives than just the section itself on $\Sigma_t$. But its time derivatives must also be included in order to have an isomorphism with $TQ_t$, which is why we must first take holonomic sections and then restrict them to $\Sigma_t$, as this includes the information of their time derivatives at time $t$. In contrast, the jets in $J^1(Y_t)$ only include information about spatial derivatives because $Y_t\to\Sigma_t$ is already restricted to a spatial slice.
\end{remark}

The jet decomposition map can straightforwardly be used to obtain the instantaneous Lagrangian and Legendre transform \cite{gotayMomentumMapsClassical2004a}.

\begin{definition}\label{def:instlagran}
    Given a covariant Lagrangian density $\mathfrak{L}\colon J^1Y\to \Omega^4(M)$ we first define the \textit{instantaneous Lagrangian density} $\mathfrak L_{t,\zeta}\colon
J^1(Y_t)\times_{Y_t}VY_t
\longrightarrow
\Lambda^3\Sigma_t$ by
    \begin{align*}
        \mathfrak L_{t,\zeta}(j^1\varphi_t,\dot\varphi_t)=
i_t^*\iota_{\zeta_M}
\mathfrak L\left(\beta_\zeta^{-1}
(j^1\varphi_t,\dot\varphi_t)\right),
    \end{align*}
    where $\iota_{\zeta_M}$ is just the insertion of $\zeta_M$ into the first slot. The \textit{instantaneous Lagrangian} $\mathcal{L}_{t,\zeta}\in C^\infty(TQ_t)$ is then simply the spatial integral
    \begin{align*}
        \mathcal{L}_{t,\zeta}(\varphi_t,\dot{\varphi}_t)=\int_{\Sigma_t}\mathfrak{L}_{t,\zeta}(j^1\varphi_t,\dot{\varphi}_t),
    \end{align*}
    where $(\varphi_t,\dot{\varphi}_t)$ is understood as an element of $TQ_t$ via the isomorphism from Proposition \ref{prop:crucialisom}.
\end{definition}
The action can then be rewritten as a time-integral of $\mathcal{L}_{t,\zeta}$. We end this Subsection with the following.
\begin{definition}\label{def:instlegendre}
    The \textit{instantaneous Legendre transform} $TQ_t\to T^*Q_t$ is the fiber derivative $\mathbb{F}\mathcal{L}_{t,\zeta}$ defined by
    \begin{align*}
        \mathbb{F}\mathcal{L}_{t,\zeta}(\varphi_t,\dot{\varphi}_t)(\dot{\psi}_{\varphi_t})=\langle d\mathcal{L}_{t,\zeta}(\varphi_t,\dot{\varphi}_t),\dot{\psi}_{\varphi_t}\rangle.
    \end{align*}
    where $\dot{\psi}_{\varphi_t}\in T_{\varphi_t}Q_t$ is viewed as an element of $TTQ_t$ through the canonical identification (i.e. $\mathbb{F}\mathcal{L}_{t,\zeta}$ is only the vertical part of $d\mathcal{L}_{t,\zeta}$, not the full differential) \cite{binz}. Its image is called the \textit{instantaneous primary constraint set}.
\end{definition}
It is assumed that $\text{Im}(\mathbb{F}\mathcal{L}_{t,\zeta})$ is a smooth, closed submanifold of $T^*Q_t$ and that $\mathbb{F}\mathcal{L}_{t,\zeta}$ is a submersion with connected fibers \cite{gotayMomentumMapsClassical2004a}. It can be written more concretely as
\begin{align*}
    \left\langle
\mathbb F\mathcal L_{t,\zeta}(q,v_q),w_q
\right\rangle
\left.=\frac{d}{d\epsilon}\right|_{\epsilon=0}
\mathcal L_{t,\zeta}(q,v_q+\epsilon w_q),\;\;\;\;\; q\in Q_t, v_q,w_q\in T_qQ_t.
\end{align*}
This clearly shows that the differentiation occurs only in the vertical velocity direction. Note that from now on we will use the notation $q\in Q_t,v_q\in T_qQ_t$ when we think ``instantaneously'', whereas we will revert back to the $\varphi_t,\dot{\varphi}_t$ notation when we think of the instantaneous state as descending from a covariant field history $\varphi\in \Gamma(Y)$.
\begin{example}\label{ex:kineticlegendre}
Suppose that
\[
\mathcal L_{t,\zeta}(q,v_q)
=
\frac12 g_q(v_q,v_q)-V(q),
\qquad v\in T_qQ_t,
\]
where \(g_q\colon T_qQ_t\times T_qQ_t\to\R\) is a symmetric bilinear form on the admissible velocities.
Then \cite{binz}
\[
\mathbb F\mathcal L_{t,\zeta}(q,v_q)
=
g_q(v_q,\cdot)\in T_q^*Q_t.
\]
\end{example}
One can also define a covariant Legendre transform $\mathbb{F}\mathcal{L}$ from $J^1Y$ to the so-called \textit{multiphase space} $Z$, which is canonically isomorphic to the affine dual $J^1Y^\star$ \cite{gotayMomentumMapsClassical2004}. The image of this covariant Legendre transform is the covariant primary constraint set. The covariant and instantaneous Legendre transforms commute in the sense that the diagram
\begin{equation}\label{eq:legendre-square}
\begin{tikzcd}
(j^1Q)_t
    \arrow[r, "\mathbb{F}\mathfrak{L}"]
    \arrow[d, "\beta_\zeta"']
&
Z_t
    \arrow[d, "R_t"]
\\
TQ_t
    \arrow[r, "\mathbb{F}\mathcal{L}_{t,\zeta}"']
&
T^*Q_t .
\end{tikzcd}
\end{equation}
commutes \cite{gotayMomentumMapsClassical2004a}. The map
$R_t\colon Z_t\to T^*Q_t$ associates a canonical momentum to a multimomentum section in $Z_t$. Thus, taking the covariant Legendre transform and subsequently performing the $3+1$ decomposition yields the same canonical momentum as first decomposing the covariant jet and then taking the instantaneous Legendre transform. 

Henceforth we fix a compatible Lagrangian slicing $(\zeta_M,\zeta_Y)$ and suppress its
dependence from the notation where possible.

\section{From covariant to instantaneous boundary conditions}\label{sec:BCs}

Having clarified the terminology of covariant and instantaneous Lagrangians and Lagrangian densities, we proceed to a novel study of boundaries in this framework. The closest work we are aware of in this direction is the dissertation by Kur \cite{Kur2018MultisymplecticGW}, which focuses on the multisymplectic formalism. We, however, focus on the other three corners of the square \eqref{eq:legendre-square}.

If spacetime $M$ has a boundary $\partial M$, then a covariant boundary condition is defined as \cite{binz}:
\begin{definition}
    A \textit{boundary condition} is a choice of subbundle $B\subset J^1Y|_{\partial M}$, where $Y\to M$ denotes the field bundle. One then considers only those sections $\varphi\in\Gamma(Y)$ for which $j^1\varphi|_{\partial M}\in B$.
\end{definition}
The best-known example would be a \textit{Dirichlet boundary condition}, i.e. a choice of section $b\in \Gamma(Y|_{\partial M})$ and allowing only fields which agree with $b$ over $\partial M$.

With the slicing $\Sigma_t$ of $M$ into Cauchy surfaces we call $\partial\Sigma_t$ the \textit{instantaneous spatial boundary} at time $t$ and we define the \textit{total spatial boundary} as $\bigsqcup_t \partial\Sigma_t$. We assume the slicing to preserve the boundary, i.e. $\zeta_M|_{\partial M}\in T\partial M$, such that its flow by $\tau$ transports $\partial\Sigma_t$ to $\partial\Sigma_{t+\tau}$. The jet decomposition $\beta_\zeta$ then restricts to the boundary:
\[
\beta_{\zeta}^\partial:
(J^1Y)|_{\partial\Sigma_t}
\overset{\sim}{\longrightarrow}
\left(J^1(Y_t)\times_{Y_t}VY_t\right)|_{\partial\Sigma_t}.
\]
Hence a covariant boundary condition \(B\subset J^1Y|_{\partial M}\)
induces the instantaneous boundary condition
\[
B_t^\zeta:=\beta_{\zeta}^\partial(B|_{\partial\Sigma_t}).
\]
Then for a field \(\varphi\):
\[
j^1\varphi|_{\partial\Sigma_t}\in B|_{\partial\Sigma_t}
\quad\Longleftrightarrow\quad
(j^1\varphi_t,\dot\varphi_t)|_{\partial\Sigma_t}\in B_t^\zeta.
\]
We denote the instantaneous boundary restriction by $r_t\colon Q_t=\Gamma(Y_t)\to \Gamma(Y_t|_{\partial\Sigma_t})$, sending $r_t(q)=q|_{\partial\Sigma_t}$. We assume that the relevant spaces of sections are smooth manifolds and that $r_t$ is a smooth surjective submersion.
\begin{lemma}\label{kernellemma}
For $b_t\in \Gamma(Y_t|_{\partial\Sigma_t})$, we define $Q_t^{b_t}=r_t^{-1}(b_t)$. It then follows that
\begin{align*}
    T_qQ_t^{b_t}
    &= \ker(T_qr_t) = \left\{v\in T_qQ_t :
    v|_{\partial\Sigma_t}=0\right\}.
\end{align*}
\end{lemma}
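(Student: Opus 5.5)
The plan is to prove the two equalities separately: first $T_qQ_t^{b_t}=\ker(T_qr_t)$ from the submersion assumption, then identify $\ker(T_qr_t)$ concretely using the identification of tangent vectors from the proof of Proposition \ref{prop:crucialisom}.

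\emph{First equality.} Since $r_t\colon Q_t\to\Gamma(Y_t|_{\partial\Sigma_t})$ is assumed to be a smooth surjective submersion, the regular value (preimage) theorem applies. In the infinite-dimensional setting this needs split surjectivity of $T_qr_t$, which I will take to be part of the submersion hypothesis. It gives that $Q_t^{b_t}=r_t^{-1}(b_t)$ is a smooth submanifold of $Q_t$ with $T_qQ_t^{b_t}=\ker(T_qr_t)$ for every $q\in Q_t^{b_t}$. Should one want to avoid quoting the theorem, the inclusion $T_qQ_t^{b_t}\subseteq\ker(T_qr_t)$ is elementary. Any curve $\gamma$ in $Q_t^{b_t}$ through $q$ has $r_t\circ\gamma\equiv b_t$, so $T_qr_t(\dot\gamma(0))=0$. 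The reverse inclusion is exactly where the submersion property (local normal form) is used: every kernel vector is tangent to the fiber.

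\emph{Second equality.} I compute $T_qr_t$ under the identification $T_qQ_t\cong\Gamma(q^*VY_t)$ from the proof of Proposition \ref{prop:crucialisom}, and the analogous identification $T_{r_t(q)}\Gamma(Y_t|_{\partial\Sigma_t})\cong\Gamma\bigl((q|_{\partial\Sigma_t})^*VY_t|_{\partial\Sigma_t}\bigr)$. For $v=\dot\gamma(0)$, pointwise evaluation gives, for $x\in\partial\Sigma_t$,
\[
T_qr_t(v)(x)=\left.\frac{d}{ds}\right|_{s=0} r_t(\gamma(s))(x)=\left.\frac{d}{ds}\right|_{s=0}\gamma(s)(x)=v(x),
\]
because restriction to $\partial\Sigma_t$ commutes with evaluation at boundary points. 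Hence $T_qr_t(v)=v|_{\partial\Sigma_t}$, and $\ker(T_qr_t)=\{v\in T_qQ_t: v|_{\partial\Sigma_t}=0\}$.

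\emph{Main obstacle.} The only delicate point is the functional-analytic justification of the first equality, namely that the infinite-dimensional preimage theorem applies. In the chosen function-space topologies (e.g.\ Fréchet or Sobolev spaces of sections), restriction to the boundary is a continuous linear map with a continuous right inverse given by an extension operator. That right inverse is what the submersion hypothesis encodes, so I would invoke the assumption directly rather than prove it.
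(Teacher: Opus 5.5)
Your proposal is correct and follows the paper's proof: the paper also obtains $T_qQ_t^{b_t}=\ker(T_qr_t)$ directly from the submersion hypothesis, and then notes that $T_qr_t(v)=v|_{\partial\Sigma_t}$. You add two things the paper leaves implicit: the pointwise computation of $T_qr_t$ via the identification $T_qQ_t\cong\Gamma(q^*VY_t)$, and the split-surjectivity (local normal form) needed for the infinite-dimensional preimage theorem, which the paper states explicitly only later, as the ``smooth split surjective submersion'' assumption.
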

\begin{proof}
    Since $r_t$ is a smooth submersion, $Q_t^{b_t}=r_t^{-1}(b_t)$
is a smooth submanifold and $T_qQ_t^{b_t}=\ker(T_qr_t)$.
Moreover, $T_qr_t(v)=v|_{\partial\Sigma_t}$, which proves the
result.
\end{proof}
We call such a specification $b_t\in \Gamma(Y_t|_{\partial\Sigma_t})$ an \textit{instantaneous spatial Dirichlet boundary condition}. Let $F_\tau^M$ and $F_\tau^Y$ denote the flows of $\zeta_M,\zeta_Y$. Define the transported pullback by 
\begin{align*}
    F_\tau^*\varphi_{t+\tau}
    :=
    (F_\tau^Y)^{-1}
    \circ \varphi_{t+\tau}
    \circ F_\tau^M,\;\;\;\;\; \varphi\in \Gamma(Y).
\end{align*}
We then define the following.
\begin{definition}
A smooth family $b_t\in \Gamma(Y_t|_{\partial\Sigma_t})$ of instantaneous spatial Dirichlet boundary conditions is called \textit{constant} with
respect to the slicing $(\zeta_M,\zeta_Y)$ if $F_\tau^*b_{t+\tau}=b_t$
for every $t$ and $\tau$.
\end{definition}
The following result then formalizes the idea that the velocity of a constant boundary condition should vanish.
\begin{lemma}\label{stupidlemma}
    Let $\varphi\in \Gamma(Y)$ be a field history. Then the family of boundary field values $\varphi_t|_{\partial\Sigma_t}=r_t(\varphi_t)$ is constant if and only if $\dot{\varphi}_t|_{\partial\Sigma_t}=0$ for every $t$.
\end{lemma}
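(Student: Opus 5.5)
The plan is to differentiate the transported pullback in $\tau$ and show that its derivative at any $\tau$ equals the transported velocity. Fix $t$ and consider the curve
\[
c(\tau):=F_\tau^*\varphi_{t+\tau}=(F_\tau^Y)^{-1}\circ\varphi\circ F^M_\tau|_{\Sigma_t}
\]
in $Q_t$. Since the slicing preserves the boundary, $F^M_\tau$ maps $\partial\Sigma_t$ to $\partial\Sigma_{t+\tau}$. Hence $r_t(c(\tau))=F_\tau^*(\varphi_{t+\tau}|_{\partial\Sigma_{t+\tau}})$, and the family of boundary values is constant exactly when $\tau\mapsto r_t(c(\tau))$ is constant for every $t$.

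The key computation is then
\[
\frac{d}{d\tau}c(\tau)=T(F_\tau^Y)^{-1}\circ\dot\varphi_{t+\tau}\circ F^M_\tau .
\]
To obtain it, use the group property $F_{\tau+s}=F_\tau\circ F_s$ of both flows to write $c(\tau+s)=F_\tau^*\bigl(F_s^*\varphi_{t+\tau+s}\bigr)$. This reduces the claim to $\tau=0$. There, differentiating $(F^Y_s)^{-1}\circ\varphi\circ F^M_s$ at $s=0$ gives $T\varphi\circ\zeta_M-\zeta_Y\circ\varphi$, which is exactly the definition of $\dot\varphi_{t}$ restricted to $\Sigma_{t}$. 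Restricting to the boundary, and using that $T_q r_t(v)=v|_{\partial\Sigma_t}$ (as in the proof of Lemma \ref{kernellemma}), we get
\[
\frac{d}{d\tau}r_t(c(\tau))=T(F_\tau^Y)^{-1}\circ\bigl(\dot\varphi_{t+\tau}|_{\partial\Sigma_{t+\tau}}\bigr)\circ F^M_\tau .
\]

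Both directions now follow.

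($\Leftarrow$) If $\dot\varphi_s|_{\partial\Sigma_s}=0$ for all $s$, the derivative vanishes for all $\tau$. Hence $\tau\mapsto r_t(c(\tau))$ is constant on the interval, so $F_\tau^*b_{t+\tau}=b_t$.

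($\Rightarrow$) If the family is constant, the derivative at $\tau=0$ vanishes, so $\dot\varphi_t|_{\partial\Sigma_t}=0$ for every $t$.

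The main obstacle is justifying the derivative formula rigorously for curves in the infinite-dimensional manifold $Q_t$, where the tangent space is identified as in Proposition \ref{prop:crucialisom}. Since the derivative can be computed pointwise in $x\in\Sigma_t$ and everything is smooth and finite-dimensional fiberwise, I would argue pointwise. For each boundary point $x$, the map $\tau\mapsto (F^Y_\tau)^{-1}(\varphi(F^M_\tau x))$ is a curve in the fiber $Y_x$ that is constant if and only if its derivative vanishes for all $\tau$. Along the way I would check that $T(F^Y_\tau)^{-1}$ is injective on vertical vectors, which is needed in the ($\Leftarrow$) direction only implicitly and is immediate since it is a diffeomorphism. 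I would also check that the chain rule correctly accounts for the horizontal component $\zeta_M$, which is what forces the combination $T\varphi\circ\zeta_M-\zeta_Y\circ\varphi$ to be vertical.
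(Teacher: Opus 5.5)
Your proposal is correct and follows the paper's route. You differentiate $\tau\mapsto F_\tau^*(\varphi_{t+\tau}|_{\partial\Sigma_{t+\tau}})$, identify the derivative with the transported boundary velocity $F_\tau^*(\dot\varphi_{t+\tau}|_{\partial\Sigma_{t+\tau}})$, and read off both directions. Your reduction to $\tau=0$ via the flow property and your fiberwise pointwise argument supply the details the paper compresses into ``by the definition of the flow''; injectivity of $T(F^Y_\tau)^{-1}$ is not needed in either direction, since $(\Rightarrow)$ only uses $\tau=0$.
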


\begin{proof}
By the definition of the flow associated to the slicing,
\begin{align*}
    \frac{d}{d\tau}
    F_\tau^*(\varphi_{t+\tau}|_{\partial\Sigma_{t+\tau}})
    =
    F_\tau^*(\dot{\varphi}_{t+\tau}
    |_{\partial\Sigma_{t+\tau}}).
\end{align*}
Thus a constant family of boundary states (for which the LHS vanishes) has vanishing boundary velocity for all $t$.
Conversely, if the boundary velocity vanishes at every time, then
$F_\tau^*\left(\varphi_{t+\tau}|_{\partial\Sigma_{t+\tau}}\right)$ has zero derivative with respect to $\tau$ and
is therefore equal to $\varphi_t|_{\partial\Sigma_t}$.
\end{proof}

This simple result just formalizes the idea that a family of instantaneous states whose time derivatives are required to be zero are non-dynamical, i.e. when a configuration is specified at any time, the state must be in this configuration at all times. We have simply applied this idea to the specific case in which this restriction of the time derivative is made only on the boundary.

We now proceed to prove a characterization of the instantaneous state space in the presence of a constant Dirichlet boundary condition. 

\begin{proposition}\label{beautifulprop}
    Let $b$ denote a constant spatial Dirichlet boundary condition, whose restriction to $\partial \Sigma_t$ is denoted $b_t$. Denote by $(j^1Q^b)_t$ the collection of holonomic sections agreeing with $b$ on the spatial boundary, restricted to $\Sigma_t$. Then the isomorphism $(j^1Q)_t\cong TQ_t$ of Proposition \ref{prop:crucialisom}, induced by the jet decomposition map $\beta_\zeta$, restrict to an isomorphism $(j^1Q^b)_t\cong TQ^{b_t}_t$.
\end{proposition}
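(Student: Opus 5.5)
The plan is to show that the isomorphism $\beta_\zeta\colon(j^1Q)_t\to TQ_t$ of Proposition \ref{prop:crucialisom} carries the subset $(j^1Q^b)_t$ into $TQ_t^{b_t}$, and that every element of $TQ_t^{b_t}$ has a preimage in $(j^1Q^b)_t$. Since $\beta_\zeta$ is already injective, this gives a bijection. It is an isomorphism onto the submanifold $TQ_t^{b_t}\subset TQ_t$, which is a submanifold by Lemma \ref{kernellemma}. By that lemma we also have the concrete description
\[
TQ_t^{b_t}=\{(q,v_q)\in TQ_t : q|_{\partial\Sigma_t}=b_t,\ v_q|_{\partial\Sigma_t}=0\}.
\]

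First, the inclusion. Take $\varphi\in\Gamma(Y)$ agreeing with $b$ on the spatial boundary, and write $\beta_\zeta(j^1\varphi\circ i_t)=(\varphi_t,\dot\varphi_t)$. Then $r_t(\varphi_t)=b_t$, so $\varphi_t\in Q_t^{b_t}$. The family $r_s(\varphi_s)=b_s$ is constant by assumption on $b$. Lemma \ref{stupidlemma} then gives $\dot\varphi_t|_{\partial\Sigma_t}=0$, and so $\dot\varphi_t\in T_{\varphi_t}Q_t^{b_t}$ by Lemma \ref{kernellemma}.

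Second, surjectivity, which I expect to be the main obstacle. Given $(q,v)\in TQ_t^{b_t}$, Proposition \ref{prop:crucialisom} yields some history $\varphi$ with $(\varphi_t,\dot\varphi_t)=(q,v)$. However, $\varphi$ need not satisfy $\varphi_s|_{\partial\Sigma_s}=b_s$ for $s\neq t$, so we must construct a better representative. I would build it explicitly using the slicing. Work in the trivialization $Y\cong Y_\Sigma\times\R$ and identify $Q_s$ with $Q_t$ via $F^*_{s-t}$; constancy of $b$ means $b_s$ corresponds to the fixed $b_t$ under this identification. Choose a curve $\gamma(\tau)$ in $Q_t^{b_t}$ with $\gamma(0)=q$ and $\gamma'(0)=v$. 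Such a curve exists because $v$ is tangent to the submanifold $Q_t^{b_t}$, for instance via a local chart of $Q_t^{b_t}$ or the exponential of a connection on the vertical bundle. Then define
\[
\varphi_{t+\tau}:=(F_\tau^{*})^{-1}\gamma(\tau),
\]
cut off smoothly in $\tau$ so that it is a global section of $Y$. This $\varphi$ is smooth, since the trivialization is a diffeomorphism. Its boundary values satisfy $F^*_\tau(\varphi_{t+\tau}|_{\partial\Sigma_{t+\tau}})=b_t$, so by constancy $\varphi$ agrees with $b$ on all of $\partial M$. Its 3+1 data at time $t$ are $(q,\gamma'(0))=(q,v)$, by the formula for $\frac{d}{d\tau}F_\tau^*$ used in the proof of Lemma \ref{stupidlemma}. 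The only delicate point is the smoothness of the assembled section on $M$, which relies on our standing assumption that the section spaces are manifolds with smooth evaluation. Hence $j^1\varphi\circ i_t\in(j^1Q^b)_t$ maps to $(q,v)$.

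Finally, injectivity and the inverse follow by restricting the jet reconstruction map $\beta_\zeta^{-1}$ to $TQ_t^{b_t}$. Only the first jet along $\Sigma_t$ matters, so the choice of extension $\varphi$ is irrelevant. This completes the identification $(j^1Q^b)_t\cong TQ_t^{b_t}$.
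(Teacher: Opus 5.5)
Your proposal is correct and follows essentially the same route as the paper: the forward inclusion via Lemma \ref{stupidlemma} and Lemma \ref{kernellemma}, and surjectivity by choosing a curve $\gamma$ in $Q_t^{b_t}$ with $\gamma(0)=q$, $\dot\gamma(0)=v$ and transporting it along the slicing, $\varphi_{t+\tau}=F_\tau^Y\circ\gamma(\tau)\circ(F_\tau^M)^{-1}$, before concluding with injectivity of $\beta_\zeta$. Your extra cutoff step fills in something the paper leaves implicit, since the paper only builds the history near $\Sigma_t$; the cutoff should be done by reparametrizing, i.e.\ using $\gamma(\chi(\tau))$ with $\chi(\tau)=\tau$ near $0$, so that the curve stays in $Q_t^{b_t}$ and the boundary condition survives globally.
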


\begin{proof}
We restrict the jet decomposition map $\beta_\zeta$ to elements of
$(j^1Q^b)_t$. Let $j^1\varphi\circ i_t\in(j^1Q^b)_t$. Since
$\varphi$ agrees with $b$ on the spatial boundary, we have $\varphi_t\in Q_t^{b_t}$.
Moreover, since $b$ is constant, Lemma \ref{stupidlemma} implies $\dot{\varphi}_t|_{\partial\Sigma_t}=0$. It then follows from Lemma \ref{kernellemma} that
\begin{align*}
    \dot{\varphi}_t
    \in
    \ker(T_{\varphi_t}r_t)
    =
    T_{\varphi_t}Q_t^{b_t}.
\end{align*}
Thus $\beta_\zeta(j^1\varphi\circ i_t)
    =(\varphi_t,\dot{\varphi}_t)$ lies in $TQ_t^{b_t}$ under the isomorphism of Proposition \ref{prop:crucialisom}.

Conversely, let $(q,v)\in TQ_t^{b_t}$. Since $Q_t^{b_t}=r_t^{-1}(b_t)$ is a
smooth submanifold of $Q_t$, there exists a smooth curve $\gamma:I\longrightarrow Q_t^{b_t}$ such that $\gamma(0)=q$ and $\dot{\gamma}(0)=v$. For $s\in I$, define a field on the
neighboring slice $\Sigma_{t+s}$ by
\begin{align*}
    \varphi_{t+s}
    :=
    F_s^Y\circ\gamma(s)\circ(F_s^M)^{-1}.
\end{align*}
Clearly $\varphi_t=\gamma(0)=q$. These fields define a field history in a neighborhood of
$\Sigma_t$. Since $\gamma(s)|_{\partial\Sigma_t}=b_t$ and $b$ is
constant with respect to the slicing, we have
\begin{align*}
    \varphi_{t+s}|_{\partial\Sigma_{t+s}}
    =
    F_s^Y\circ b_t\circ(F_s^M)^{-1}
    =
    b_{t+s}.
\end{align*}
Thus $\varphi$ agrees with $b$ on the spatial boundary. Moreover $F_s^*\varphi_{t+s}=\gamma(s)$,
and therefore
\begin{align*}
    \dot{\varphi}_t
    =
    \left.\frac{d}{ds}\right|_{s=0}
    F_s^*\varphi_{t+s}
    =
    \dot{\gamma}(0)
    =
    v.
\end{align*}
Consequently, $(q,v)$ is the image of $j^1\varphi\circ i_t$ under the restriction of
$\beta_\zeta$ to $(j^1Q^b)_t$, using the isomorphism of Proposition
\ref{prop:crucialisom}. By the injectivity of $\beta_\zeta$ this restriction is itself an isomorphism.
\end{proof}

Proposition \ref{beautifulprop} demonstrates that, when a constant spatial Dirichlet boundary condition is imposed on the covariant configuration space, the corresponding instantaneous state space is a tangent bundle. However, one may instead encounter an instantaneous boundary condition which restricts only the admissible \textit{velocities},
without selecting a particular boundary value for the instantaneous configurations.
This will occur in Yang--Mills theory, where finiteness of the
instantaneous Lagrangian imposes asymptotic conditions on the
velocities (Section \ref{sec:YM}). We must therefore also understand the instantaneous state space obtained by allowing all boundary
values, while requiring the velocity to vanish on the boundary, a case to which Proposition \ref{beautifulprop} seems inapplicable.

\begin{corollary}\label{funcorollary}
Let $Y\to M$ be a field bundle over a spacetime sliced into Cauchy
surfaces $\Sigma_t$ with boundaries $\partial\Sigma_t$. Denote by
$Y_t$ the restriction of $Y$ over $\Sigma_t$, and let
$Q_t=\Gamma(Y_t)$ be the instantaneous configuration space at time $t$. Let $D_t\subset TQ_t$ denote the instantaneous state space obtained
from histories $\varphi\in\Gamma(Y)$ satisfying $\dot{\varphi}_t|_{\partial\Sigma_t}=0$
for every time $t$. Then
\begin{align}\label{eq:corollary}
    D_t
    =
    \bigsqcup_{b_t\in\Gamma(Y_t|_{\partial\Sigma_t})}
    TQ_t^{b_t},
\end{align}
where $Q_t^{b_t}:=r_t^{-1}(b_t)$
is the space of configurations agreeing with $b_t$ on
$\partial\Sigma_t$.
\end{corollary}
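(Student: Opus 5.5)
The plan is to prove the equality \eqref{eq:corollary} by double inclusion, reducing each direction to Proposition \ref{beautifulprop} applied sector by sector. First, note that the union on the right is genuinely disjoint and sits inside $TQ_t$. Since $r_t$ is a surjective submersion, the fibres $Q_t^{b_t}=r_t^{-1}(b_t)$ partition $Q_t$. Hence the tangent bundles $TQ_t^{b_t}\subset TQ_t$ have pairwise disjoint base points. By Lemma \ref{kernellemma}, the set $\bigsqcup_{b_t}TQ_t^{b_t}$ is precisely
\[
\{(q,v)\in TQ_t : v|_{\partial\Sigma_t}=0\},
\]
that is, the vertical subbundle $\ker Tr_t$ of the submersion $r_t$.

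For the inclusion ``$\subseteq$'', take a history $\varphi$ with $\dot\varphi_t|_{\partial\Sigma_t}=0$ for every $t$. By Lemma \ref{stupidlemma}, the family $\varphi_t|_{\partial\Sigma_t}$ is constant with respect to the slicing. It therefore defines a constant spatial Dirichlet boundary condition $b$ with $b_t=r_t(\varphi_t)$, and $\varphi\in Q^b$. Proposition \ref{beautifulprop} then places $\beta_\zeta(j^1\varphi\circ i_t)=(\varphi_t,\dot\varphi_t)$ in $TQ_t^{b_t}$. Alternatively, one can argue directly: $\varphi_t\in Q_t^{b_t}$, and $\dot\varphi_t\in\ker T_{\varphi_t}r_t=T_{\varphi_t}Q_t^{b_t}$ by Lemma \ref{kernellemma}.

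For the inclusion ``$\supseteq$'', fix $b_t\in\Gamma(Y_t|_{\partial\Sigma_t})$ and $(q,v)\in TQ_t^{b_t}$. I will transport $b_t$ along the slicing flow, setting
\[
b_{t+\tau}:=F^Y_\tau\circ b_t\circ (F^M_\tau)^{-1}.
\]
This is a constant family by construction, using the group property of the flows and the assumption that the flow of $\zeta_M$ carries $\partial\Sigma_t$ to $\partial\Sigma_{t+\tau}$. The surjectivity half of Proposition \ref{beautifulprop} then yields a history $\varphi$ agreeing with $b$ on the boundary and satisfying $\beta_\zeta(j^1\varphi\circ i_t)=(q,v)$. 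The construction there is the explicit one, $\varphi_{t+s}=F^Y_s\circ\gamma(s)\circ(F^M_s)^{-1}$. Since its boundary values are constant, Lemma \ref{stupidlemma} gives $\dot\varphi_{t'}|_{\partial\Sigma_{t'}}=0$ for all $t'$, so $(q,v)\in D_t$.

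The main obstacle is a subtlety rather than a computation. Proposition \ref{beautifulprop} presupposes that the boundary condition is constant, but $D_t$ is defined through the condition ``for every time $t$'', and the history built in its proof is only defined near $\Sigma_t$. I would handle this by making explicit that, as in Proposition \ref{beautifulprop}, membership in the instantaneous state space depends only on the jet along $\Sigma_t$. So a local history suffices, or it can be extended globally by the same flow formula with $\gamma$ extended arbitrarily inside $Q_t^{b_t}$. The key point to stress is that every history in $D_t$ automatically has a constant boundary value, by Lemma \ref{stupidlemma}. This is what makes the label $b_t$ well defined and turns $D_t$ into the sectored union, rather than a more general distribution.
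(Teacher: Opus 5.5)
Your proposal is correct and matches the paper's proof essentially step for step: the inclusion $\subseteq$ via Lemma \ref{stupidlemma} and Proposition \ref{beautifulprop}; the inclusion $\supseteq$ by transporting $b_t$ along the slicing to a constant boundary condition, applying the surjectivity half of Proposition \ref{beautifulprop}, and then Lemma \ref{stupidlemma}; and disjointness from the fibres of $r_t$ together with Lemma \ref{kernellemma}. Your remark about extending the locally defined history to all times is a sound refinement that the paper skips; rather than extending $\gamma$ arbitrarily, do it cleanly by replacing $\gamma$ with $\gamma\circ\chi$, where $\chi\colon\mathbb{R}\to I$ is smooth and equals the identity near $0$.
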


\begin{proof}
Let $\varphi$ be such a field history. By Lemma \ref{stupidlemma}, the boundary
values $b_t:=\varphi_t|_{\partial\Sigma_t}$
define a constant spatial Dirichlet boundary condition. Hence, by
Proposition \ref{beautifulprop}, the instantaneous state obtained from
$\varphi$ at time $t$ lies in $TQ_t^{b_t}$. This proves that every element of $D_t$ lies in one of the summands on
the right-hand side.

Conversely, let $(q,v)\in TQ_t^{b_t}$
for some $b_t\in\Gamma(Y_t|_{\partial\Sigma_t})$. Let $b$ be the
constant spatial Dirichlet boundary condition generated from $b_t$ by
the slicing. Proposition \ref{beautifulprop} says that $(q,v)$ is
obtained, under the jet decomposition map, from a holonomic section
satisfying this constant boundary condition. By Lemma
\ref{stupidlemma}, such a history has vanishing velocity on
the spatial boundary at every time. Therefore $(q,v)\in D_t$.

Finally, the union is disjoint because the spaces $Q_t^{b_t}$ are the
fibers of the boundary restriction map $r_t$. Equivalently, by Lemma
\ref{kernellemma}: $TQ_t^{b_t}
    =
    \ker(Tr_t)|_{Q_t^{b_t}},$ so the RHS of Eq.~\eqref{eq:corollary} is precisely the decomposition of the vanishing-boundary-velocity states into their fixed-boundary-value sectors.
\end{proof}

Corollary \ref{funcorollary} shows that a field history whose velocity vanishes on the spatial boundary at every time necessarily has some constant spatial boundary value. But since this value is not specified, the resulting instantaneous state space is not \textit{one} fixed tangent bundle $TQ_t^{b_t}$, but the disjoint union of all such fixed-boundary sectors. Thus the fibers of $r_t$ form
dynamically preserved boundary sectors, and the corresponding leaves are the tangent bundles $TQ_t^{b_t}$ of the fixed-Dirichlet configuration spaces. From now on we suppress the time subscript on $Q_t$, $Y_t$, $\Sigma_t$, $r_t$ and $D_t$ where possible.

\section{Instantaneous Legendre transform with boundaries}\label{sec:legendre}

The instantaneous Legendre transform (Definition \ref{def:instlegendre}) is defined on the tangent bundle to the instantaneous configuration space. If a spatial Dirichlet boundary condition is present, then by Proposition \ref{beautifulprop} the instantaneous state space is still a tangent bundle and one can straightforwardly perform the Legendre transform with boundary condition.

But this is less straightforward in the case covered by Corollary \ref{funcorollary}, as a priori there seem to be three possibilities for defining the Legendre transform. If the Lagrangian $\mathcal{L}$ is defined on the full space $TQ$, then one might just restrict its Legendre transform $\mathbb{F}\mathcal{L}$ to the sum $D_t$ of disjoint sectors. But the Lagrangian might be ill-defined on those states in the full tangent bundle which do not satisfy the boundary condition (the vanishing of velocity on the boundary in the case of the corollary), for instance in the case in which $\partial\Sigma$ is the conformal boundary of a non-compact Cauchy surface (we will see this happen for Yang-Mills theory in Section \ref{sec:YM}). In that case there are two other obvious things to consider: performing the Legendre transform on each separate sector tangent bundle, or recalculate it on the full space of all sectors $D$, using a definition of fiber derivative that works also when the domain is not strictly a tangent bundle. The question is whether these various options agree.

Let $r\colon Q\to Q^\partial$ be the boundary restriction map, where $Q^\partial\subset \Gamma(Y_t|_{\partial\Sigma_t})$ denotes the space of allowed boundary data (earlier we assumed this to equal all of $\Gamma(Y_t|_{\partial\Sigma_t})$, but one could in principle impose further conditions), and assume that \(r\) is a smooth split
surjective submersion. The admissible velocity bundle $D=\text{ker}(Tr)=\bigsqcup_{b\in Q^\partial}TQ^b$ is not a tangent bundle, but it is still a vector bundle with typical fibers $D_q=T_qQ^b$ for $q\in Q^b$. We can therefore still define its dual bundle $D^*\to Q$, with fiber $D_q^*:=(D_q)^*\cong T_q^* Q^b$. So $D^*|_{Q^b} \cong T^*Q^b$. Clearly $D^*$ retains the sector structure of $D$. Intuitively, this just means that there are no boundary conjugate momenta, since there were no boundary velocities in $D$.

\begin{remark}\label{boundaryremark}
    It is crucial to note that $D^*=\bigsqcup_{b\in Q^\partial} T^*Q^b$ is not necessarily a subspace of $T^*Q$. It might be that certain elements of $T^*Q^b$ are well-defined precisely because they need to act only on tangent vectors $v\in T_q Q^b$ to configurations that equal $b$ on $\partial\Sigma$, whereas co-vectors in $T_q^*Q$ would need to be well-defined on the larger set of vectors $T_qQ$. We will see an example of this in Section \ref{sec:YM}.
\end{remark}

We now define the Legendre transform on all of $D$ by simply using the general definition of the fiber derivative on vector bundles.

\begin{definition}
Let \(\mathcal{L}_D\colon D\to\mathbb R\) be an instantaneous Lagrangian. Its Legendre
transform is the fiber derivative $\mathbb{F}\mathcal{L}_D\colon D\to D^*$
defined by
\[
\left\langle \mathbb F_D \mathcal L_D(v_q),w_q\right\rangle
=
\left.\frac{d}{d\epsilon}\right|_{\epsilon=0}
\mathcal L_D(v_q+\epsilon w_q),
\qquad
v_q,w_q\in D_q.
\]
\end{definition}

This works because we sum only vectors tangent to the same sector $Q^b$. We now address the question as to whether this Legendre transform agrees with the other possible definitions.

\begin{proposition}\label{prop:legendresectors}
Suppose one has a Lagrangian $\mathcal{L}_D\colon D\to \R$. For every \(b\in Q^\partial\), let $\mathcal L_b:=\mathcal L_D|_{TQ^b}$. Then the Legendre transform on \(D\) restricts to the ordinary Legendre
transform on each sector:
\[
\left.\mathbb F\mathcal L_D\right|_{TQ^b}
=
\mathbb F\mathcal L_b.
\]
Suppose, moreover, that there exists a Lagrangian $\mathcal L\colon TQ\longrightarrow\mathbb R$ such that \(\mathcal L_D=\mathcal L|_D\). Then
\begin{align*}
\mathbb F\mathcal L_D(v_q)
=
\left.\mathbb F\mathcal L(v_q)\right|_{D_q},\;\;\;\;\; v_q\in D_q=T_qQ^b.
\end{align*}
Consequently $\mathbb F\mathcal L_b(v_q)
=
\left.\mathbb F\mathcal L(v_q)\right|_{T_qQ^b}$ for $q\in Q^b$. Thus the three possible constructions of the Legendre transform agree
wherever they are simultaneously defined.
\end{proposition}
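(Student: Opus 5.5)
The proof is a direct comparison of directional derivatives. The key structural input is Lemma \ref{kernellemma}: for $q\in Q^b$ the fibre of $D$ over $q$ is $D_q=\ker(T_qr)=T_qQ^b$. So the vector bundle $D$ restricted over $Q^b$ is literally the tangent bundle $TQ^b$, and $D^*|_{Q^b}=T^*Q^b$. Every object in the statement is therefore a linear functional on the same vector space $T_qQ^b$, and it suffices to compare values on an arbitrary $w_q\in T_qQ^b$.

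For the first claim, I fix $b$, $q\in Q^b$ and $v_q,w_q\in T_qQ^b=D_q$. The curve $\epsilon\mapsto v_q+\epsilon w_q$ stays in the single fibre $D_q=T_qQ^b$, hence inside $TQ^b\subset D$. On this curve $\mathcal L_D$ and $\mathcal L_b=\mathcal L_D|_{TQ^b}$ take identical values, so their derivatives at $\epsilon=0$ agree. The left side is $\langle\mathbb F\mathcal L_D(v_q),w_q\rangle$ by the definition of the fibre derivative on $D$. The right side is $\langle\mathbb F\mathcal L_b(v_q),w_q\rangle$ by Definition \ref{def:instlegendre} applied to the tangent bundle $TQ^b$. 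Since $w_q$ is arbitrary, the two covectors in $T_q^*Q^b=D_q^*$ coincide.

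For the second claim, I assume $\mathcal L_D=\mathcal L|_D$ with $\mathcal L$ defined on all of $TQ$. I view the same curve $v_q+\epsilon w_q$ inside $T_qQ$, using the inclusion $D_q\subset T_qQ$. Its derivative computes $\langle\mathbb F\mathcal L(v_q),w_q\rangle$, because the fibre derivative is a derivative along the linear fibre $T_qQ$ and the curve lies in the subspace $D_q$. Again the two Lagrangians agree along the curve, so $\mathbb F\mathcal L_D(v_q)=\mathbb F\mathcal L(v_q)|_{D_q}$. Combining this with the first claim gives $\mathbb F\mathcal L_b(v_q)=\mathbb F\mathcal L(v_q)|_{T_qQ^b}$, and hence the agreement of the three constructions wherever all are defined.

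The only delicate point is the well-definedness of restricting: $\mathbb F\mathcal L(v_q)|_{D_q}$ is the pullback of $\mathbb F\mathcal L(v_q)$ along the inclusion $D_q\hookrightarrow T_qQ$. Because $r$ is a split submersion, this inclusion is a split (closed, complemented) linear injection, so the pullback is a continuous element of $D_q^*$. In keeping with Remark \ref{boundaryremark}, the argument only ever restricts covectors from $T^*Q$ to $D^*$ and never extends in the other direction. I also use tacitly that $\mathcal L_b$ is smooth on $TQ^b$, which follows because $TQ^b$ is a submanifold of $D$ (the preimage of $b$ under the submersion $r\circ\pi$), so restriction preserves smoothness.
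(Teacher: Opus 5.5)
Your proof is correct and follows the paper's argument: both differentiate along the curve $\epsilon\mapsto v_q+\epsilon w_q$, which stays in the single fibre $D_q=T_qQ^b$, where $\mathcal L_D$, $\mathcal L_b$ and (when it exists) $\mathcal L$ take the same values. Your added remarks on the smoothness of $\mathcal L_b$ and on restricting covectors from $T_q^*Q$ to $D_q^*$ are sound points that the paper leaves implicit; note, though, that continuity of the restricted covector needs only the continuity of the inclusion $D_q\hookrightarrow T_qQ$, not the splitting.
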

\begin{proof}
Let \(v_q,w_q\in D_q=T_qQ^b\). Then
\begin{align*}
\left\langle\mathbb F\mathcal L_D(v_q),w_q\right\rangle
&=
\left.\frac{d}{d\epsilon}\right|_{\epsilon=0}
\mathcal L_D(v_q+\epsilon w_q)=
\left.\frac{d}{d\epsilon}\right|_{\epsilon=0}
\mathcal L_b(v_q+\epsilon w_q)=
\left\langle\mathbb F\mathcal L_b(v_q),w_q\right\rangle.
\end{align*}
This proves the first statement. If
\(\mathcal L_D=\mathcal L|_D\), the same calculation gives $\left\langle\mathbb F\mathcal L_D(v_q),w_q\right\rangle
=
\left\langle\mathbb F\mathcal L(v_q),w_q\right\rangle$ for every \(w_q\in D_q\), proving the rest.
\end{proof}
Note that the covector $\mathbb{F}\mathcal{L}(v_q)$ must be restricted to $D_q=T_q Q^b$ because a priori it has the larger domain $T_q Q$.

This Proposition is not very deep, but its interest actually comes from the case in which there is \textit{no} Lagrangian defined on all of $TQ$, as will be the case in Section \ref{sec:YM}. Then one can perform the Legendre transform only sectorwise, but the Proposition ensures that it would have equalled the restriction of the Legendre transform of the general Lagrangian if that had existed.

With a notion of Legendre transform in the presence of boundaries at hand, we might ask whether there is something akin to the GiMmsy square \eqref{eq:legendre-square}. Indeed, we fully expect the commuting of the covariant and instantaneous Legendre transforms to hold also in the presence of boundaries, when the domains are restricted appropriately to jets satisfying the boundary condition. The proof of \cite[Proposition 6.3]{gotayMomentumMapsClassical2004a} still holds when the map $R_t\colon Z_t\to T^*Q_t$, defined in Eq. (5D.1) of \cite{gotayMomentumMapsClassical2004a}, is restricted to the image of the admissible jets under $\mathbb{F}\mathfrak{L}$ and to admissible variations in $D_q$.

We end this section by considering the structure of the image of the Legendre transform of $D$, i.e. the phase space in the presence of boundaries. Just as for the ordinary case without boundaries we define
\begin{definition}
The primary phase space associated with \(\mathcal L_D\) is $\mathcal{P}:=\operatorname{Im}(\mathbb F\mathcal L_D)\subset D^*$.
For each \(b\in Q^\partial\), let $\mathcal{P}_b:=\operatorname{Im}(\mathbb F\mathcal L_b)
\subset T^*Q^b$.
\end{definition}
Clearly, by Proposition \ref{prop:legendresectors}, the primary phase space decomposes: $\mathcal P=\bigsqcup_{b\in Q^\partial}\mathcal{P}_b$.
If \(\mathcal L_D\) is hyperregular, then $P=D^*
=
\bigsqcup_{b\in Q^\partial}T^*Q^b$.
In the almost-regular case, we assume each \(\mathcal{P}_b\) is a smooth closed submanifold of \(T^*Q^b\) and each sectorwise Legendre transform is a submersion with connected fibres. Then clearly \(\mathcal{P}_b\) carries the usual primary presymplectic form obtained by restricting the canonical weak symplectic form on \(T^*Q^b\). As usual in infinite-dimensional field theory, the symplectic form is only weakly
nondegenerate and need not define
an isomorphism \(T(T^*Q^b)\cong T^*(T^*Q^b)\). Crucially, however, the total primary phase space $P$ does not carry a global symplectic structure. Intuitively, there are no momenta for the boundary configurations. This is related to what physicists call \textit{edge modes} \cite{donnellyLocalSubsystemsGauge2016a,rielloEdgeModesEdge2021}.

\begin{remark}
Though the total phase space $D^*$ does not carry a global symplectic structure, it can be equipped with a Poisson structure that combines the canonical Poisson structures on the cotangent bundles, i.e. by defining
\[
\left.\{F,G\}_{D^*}\right|_{T^*Q^b}
=
\left\{
F|_{T^*Q^b},
G|_{T^*Q^b}
\right\}_{T^*Q^b},
\]
for admissible functions $F,G\colon D^*\to\R$. Its symplectic leaves are the spaces \(T^*Q^b\), or their connected components.
\end{remark}

\section{Application to Yang-Mills theory}\label{sec:YM}

So far we have not been concerned at all with field theories possessing a localizable symmetry group, although our whole motivation for this article is the interaction between such symmetry groups and boundary conditions, as outlined in Section \ref{intro}. We now proceed to an application of the results developed above to Yang-Mills theory on a Cauchy surface in Minkowski spacetime, which includes spatial asymptotic boundary conditions that can be translated to the framework developed in this paper by means of a conformal compactification. The case covered by Proposition \ref{beautifulprop} was studied at length in \cite{Borsboom:2025agn}, where it was assumed that the instantaneous state space of Yang-Mills theory in temporal gauge is a tangent bundle. We now extend this to the full generality of Corollary \ref{funcorollary}.

We recall the basics from \cite{Borsboom:2025agn}. We consider a principal $G$-bundle $P\to\Sigma\cong \R^3$, with compact structure group with Lie algebra $\mathfrak{g}$, and gauge group $\mathcal{G}=\text{Aut}(P)$. The fact that we consider configurations over $\Sigma$ and not $M$ refers to our working in temporal gauge, in which $A_0=0$. After a conformal compactification of $M$ we instead work on compact $\hat{\Sigma}$ with boundary $\partial\hat{\Sigma}\cong S^2$, and denote all objects on this compactified space with a hat. The total configuration space before imposing boundary conditions would be $Q=\text{Conn}(P)\cong \Omega^1(\Sigma,\text{Ad}(P))$ (as affine spaces), or $\hat{Q}=\text{Conn}(\hat{P})$. We denote velocities by $\alpha_A\in T_AQ$. The instantaneous Lagrangian is
\[
\mathcal L(A,\alpha_A)
=
\frac12\|\alpha_A\|^2
-
\frac12\|F(A)\|^2,
\]
where the norm on $\mathfrak{g}$-valued forms is defined by $\norm{\omega}^2=\int_\Sigma \text{Tr }\omega\wedge *\omega$. Clearly the norm on velocities $\alpha_A$ and curvatures $F(A)$ is finite only if these fall off sufficiently quickly towards infinity. The required fall-off rates were studied in \cite{Borsboom:2025agn}. For the norm $\norm{\alpha_A}$ to exist we assume it to be the pullback of some $\hat{\alpha}\in \Omega^1(\hat{\Sigma};\text{Ad}(\hat{P}))$ and require that $\hat{\alpha}|_{\partial\hat{\Sigma}}=0$, which guarantees square-integrability \cite[Proposition 3.2]{Borsboom:2025agn}. For $\norm{F(A)}$ to exist we only need to assume that $A$ equals the pullback of some $\hat{A}\in\text{Conn}(\hat{P})$ that extends smoothly to $\partial\hat{\Sigma}$ \cite[Proposition 3.3]{Borsboom:2025agn}.

From now on we work solely on the compactified space $\hat{\Sigma}$ and drop all hats. The boundary condition is $\alpha_A|_{\partial\Sigma}=0$, so by Corollary \ref{funcorollary} the instantaneous Yang--Mills state
space is
\[
D_{\mathrm{YM}}
=
\left\{
(A,\alpha_A)\in TQ:
\alpha_A|_{\partial\Sigma}=0
\right\}
=
\bigsqcup_{b\in Q^\partial}TQ^b,
\]
where $Q^\partial=\text{Conn}(P|_{\partial\Sigma})$. The Legendre transform is of the type of Example \ref{ex:kineticlegendre} with metric $\langle\omega,\upsilon\rangle_{L^2}=\int_\Sigma\text{Tr } \omega\wedge*\upsilon$:
\[
\left\langle
\mathbb F\mathcal L_D(A,\alpha_A),\omega
\right\rangle
=\langle \alpha_A,\omega\rangle_{L^2}=
\int_\Sigma\operatorname{Tr}(\alpha_A\wedge *\omega),\;\;\;\;\;\omega\in T_AQ^b.
\]
Consequently, the Yang--Mills primary phase space decomposes as $\mathcal{P}_{\mathrm{YM}}
=
\bigsqcup_{b\in Q^\partial}\mathcal{P}_b$ with $
\mathcal{P}_b\subset T^*Q^b$. Here \(\mathcal{P}_b\) may be a proper smooth subspace of \(T^*Q^b\), depending on the function spaces chosen to represent smooth cotangent vectors. This is an analytical restriction and should be distinguished from the Gauss constraint, which defines the secondary constraint surface
inside \(\mathcal{P}_b\). We write $\mathbb F\mathcal L_D(A,\alpha_A)=(A,E)$, where $E=*\alpha_A\in \Omega^2(\Sigma,\mathfrak{g})$ is the electric field. This is the example foreshadowed in Remark \ref{boundaryremark}: $E$ acts only on variations vanishing at the boundary, even though the boundary electric flux can be nonzero (see Section 4.2 in \cite{Borsboom:2025agn}).

Having established the structure of the instantaneous state and phase spaces, we turn to the action of the gauge group $\mathcal{G}$ in order to derive the group of physical gauge transformations. It acts in the usual way: 
\[
A^g=g^{-1}Ag+g^{-1}dg,
\qquad
\alpha_A^g=g^{-1}\alpha_Ag,
\qquad
E^g=g^{-1}Eg.
\]
\begin{proposition}\label{prop:gaugesectors}
The full gauge group \(\mathcal G\) acts on
\(D_{\mathrm{YM}}\) and \(\mathcal{P}_{\mathrm{YM}}\). Since $r(A^g)=r(A)^{g_\partial}$ (where $r$ is the boundary restriction map) we have
\[
g\cdot TQ^b=TQ^{b^{g_\partial}},
\qquad
g\cdot \mathcal{P}_b=\mathcal P_{b^{g_\partial}}.
\]
The full gauge group therefore acts by Poisson automorphisms of \(\mathcal{P}_{\mathrm{YM}}\), but it need not preserve its symplectic leaves.
\end{proposition}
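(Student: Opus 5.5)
We will verify the claims in three steps: compatibility with the boundary restriction, transport of the tangent and cotangent sectors, and the Poisson property of the action.

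\textbf{Step 1: compatibility with $r$.} Gauge transformations are bundle automorphisms, so they restrict to $\partial\Sigma$. This gives a homomorphism $\mathcal G\to\mathcal G_\partial:=\mathrm{Aut}(P|_{\partial\Sigma})$, written $g\mapsto g_\partial$. The formula $A^g=g^{-1}Ag+g^{-1}dg$ is local and involves only tangential data along $\partial\Sigma$ once pulled back. Hence restricting and then transforming gives the same result as transforming and then restricting, i.e. $r(A^g)=r(A)^{g_\partial}$. Consequently the map $A\mapsto A^g$ is a diffeomorphism $Q^b\to Q^{b^{g_\partial}}$, with inverse given by $g^{-1}$.

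\textbf{Step 2: action on $D_{\mathrm{YM}}$.} The tangent lift of $A\mapsto A^g$ is $\alpha_A\mapsto g^{-1}\alpha_A g$, because the affine part $g^{-1}dg$ does not depend on $A$. Since $g$ is smooth up to the boundary, $\alpha_A|_{\partial\Sigma}=0$ implies $(g^{-1}\alpha_A g)|_{\partial\Sigma}=0$. So the lift preserves $D_{\mathrm{YM}}=\ker(Tr)$ and maps $TQ^b$ onto $TQ^{b^{g_\partial}}$. This is just the tangent map of the diffeomorphism from Step 1, so Lemma \ref{kernellemma} is respected automatically.

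\textbf{Step 3: action on $\mathcal P_{\mathrm{YM}}$.} The cotangent lift sends $T^*Q^b$ to $T^*Q^{b^{g_\partial}}$ via $E\mapsto g^{-1}Eg$, i.e. dual to the inverse of the tangent map. I would use the $\mathrm{Ad}$-invariance of $\mathrm{Tr}$ and the fact that $g$ commutes with the Hodge star, which gives $\langle g^{-1}\alpha g,g^{-1}\omega g\rangle_{L^2}=\langle\alpha,\omega\rangle_{L^2}$. From this, $\mathcal L_D$ is gauge invariant: the curvature term transforms as $F(A^g)=g^{-1}F(A)g$, so its norm is preserved as well. The Legendre transform is then equivariant, $\mathbb F\mathcal L_D(g\cdot v)=g\cdot\mathbb F\mathcal L_D(v)$, and therefore $g\cdot\mathcal P_b=\mathcal P_{b^{g_\partial}}$. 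Combined with Proposition \ref{prop:legendresectors}, this sectorwise statement gives the action on all of $\mathcal P_{\mathrm{YM}}$.

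\textbf{Step 4: Poisson property.} The main point is the Poisson claim. Cotangent lifts of diffeomorphisms $\phi\colon Q^b\to Q^{b'}$ are symplectomorphisms $T^*Q^b\to T^*Q^{b'}$, since they preserve the canonical one-form. On the presymplectic primary pieces we use the leafwise Poisson bracket of the preceding Remark: for admissible $F,G$ on $D^*$ we need
\[
\{F\circ g,G\circ g\}|_{T^*Q^b}=\{F,G\}|_{T^*Q^{b^{g_\partial}}}\circ g .
\]
This follows sector by sector, because $g$ maps each leaf symplectomorphically onto a possibly different leaf, and the bracket is defined leafwise.

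\textbf{Main obstacle.} The delicate part is the infinite-dimensional and analytic bookkeeping. We must check that the cotangent lift preserves the chosen function spaces, so that it maps $\mathcal P_b$ into $\mathcal P_{b^{g_\partial}}$ rather than merely into $D^*$. We must also check that admissible functions are preserved. Both should follow from smoothness of $g$ up to $\partial\Sigma$ after the conformal compactification.

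Finally, the leaves are not preserved exactly when $g_\partial$ does not fix $b$. Taking $g$ with $b^{g_\partial}\neq b$, for instance a nonconstant boundary gauge transformation acting on $b=0$, shows that leaf preservation can fail.
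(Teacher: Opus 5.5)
Your proposal is correct and follows the paper's proof: conjugation preserves $\alpha_A|_{\partial\Sigma}=0$, and the map $\mathcal P_b\to\mathcal P_{b^{g_\partial}}$ is the cotangent lift of $A\mapsto A^g$, which makes it a Poisson map between leaves. Your extra step, using gauge invariance of $\mathcal L_D$ to get equivariance of $\mathbb F\mathcal L_D$, justifies $g\cdot\mathcal P_b=\mathcal P_{b^{g_\partial}}$, which the paper only asserts; it also already settles your stated worry that the lift might land merely in $D^*$ rather than in $\mathcal P_{b^{g_\partial}}$.
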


\begin{proof}
A smooth gauge transformation maps a smooth connection to a smooth
connection. Moreover $\alpha_A|_{\partial\Sigma}=0$ clearly implies $g^{-1}\alpha_Ag|_{\partial\Sigma}=0$.
The map between \(\mathcal{P}_b\) and \(\mathcal P_{b^{g_\partial}}\) is the cotangent lift of the corresponding map between configuration spaces, and therefore preserves their canonical Poisson structures.
\end{proof}
We denote the subgroup preserving the sector labelled by \(b\) by $\mathcal G_b
:=
\{g\in\mathcal G:b^{g_\partial}=b\}.$
Its Lie algebra is $\mathfrak g_b
=
\{\xi\in\operatorname{Lie}(\mathcal G):
D_b\xi_\partial=0\}$. 
The infinitesimal gauge transformation generated by \(\xi\) is tangent
to \(\mathcal{P}_b\) if and only if $D_b\xi_\partial=0$.
If \(D_b\xi_\partial\neq0\), it moves between symplectic leaves and
cannot be generated by a Hamiltonian function for the Poisson
structure on \(\mathcal{P}_{\mathrm{YM}}\), as follows from the fact that Hamiltonian vector fields on a Poisson manifold are tangent to its
symplectic leaves.

The question now becomes how to understand the Gauss law constraint $D_AE=0$ in this framework. Even though in temporal gauge there is no $A_0$ component whose velocity is absent in the Lagrangian, we must of course still impose the Gauss law in order to preserve this gauge. Normally the Gauss law defines the secondary constraint in one particular sector, but this can straightforwardly be extended to all sectors simultaneously. In \cite{Borsboom:2025agn} it was shown that the Gauss law is the momentum map for the subgroup $G^\infty_0\subset\mathcal{G}$ of gauge transformations which vanish on the boundary and belong to the connected component of the identity transformation. The $\infty$ superscript there refers to the conformal boundary ``at infinity''. We retain this superscript here. We then have the following result.

\begin{proposition}\label{prop:allsectorGauss}
The action of \(\mathcal G_0^\infty\) is Hamiltonian on every leaf
\(\mathcal{P}_b\), with momentum map given by the Gauss law: $J(A,E)=D_AE$.
Its zero locus is
\[
\mathcal C
=
\bigsqcup_{b\in Q^\partial}\mathcal C_b,
\qquad
\mathcal C_b
=
\{(A,E)\in \mathcal{P}_b:D_AE=0\}.
\]
Furthermore $g\cdot\mathcal C_b=\mathcal C_{b^{g_\partial}}$
for every \(g\in\mathcal G\).
\end{proposition}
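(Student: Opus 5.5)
The plan is to verify the momentum map property leafwise, for a fixed $b\in Q^\partial$, and then assemble the sectors using Proposition \ref{prop:legendresectors} and Proposition \ref{prop:gaugesectors}.

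First, $\mathcal G_0^\infty$ preserves each leaf. Since every $g\in\mathcal G_0^\infty$ is the identity on $\partial\Sigma$, we have $g_\partial=e$. Hence $r(A^g)=r(A)=b$, and by Proposition \ref{prop:gaugesectors} $g\cdot\mathcal P_b=\mathcal P_b$. The action on $\mathcal P_b\subset T^*Q^b$ is the restriction of the cotangent lift of the action on $Q^b$, and a cotangent lift is Hamiltonian with the standard momentum map $\langle J(A,E),\xi\rangle=\langle E,\xi_Q(A)\rangle$. Here $\xi\in\operatorname{Lie}(\mathcal G_0^\infty)$ vanishes on $\partial\Sigma$, and its fundamental vector field is $\xi_Q(A)=D_A\xi$.

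The key point is that $D_A\xi$ lies in $T_AQ^b$. Its boundary restriction is $D_b\xi_\partial=0$, since $\xi_\partial=0$, so by Lemma \ref{kernellemma} it is an admissible variation on which the sectorwise covector $E$ (Remark \ref{boundaryremark}) is defined. Then
\[
\langle J(A,E),\xi\rangle=\int_\Sigma \operatorname{Tr}(D_A\xi\wedge E)=-\int_\Sigma\operatorname{Tr}(\xi\, D_AE)+\int_{\partial\Sigma}\operatorname{Tr}(\xi E),
\]
and the boundary term vanishes because $\xi|_{\partial\Sigma}=0$. Thus $J$ is identified with $D_AE$, up to sign convention, as a map into the dual of the Lie algebra. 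This is the computation of \cite{Borsboom:2025agn}, and it goes through unchanged on every leaf.

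Next, the zero locus decomposes. By the above, $J=0$ on $\mathcal P_b$ if and only if $\int\operatorname{Tr}(\xi D_AE)=0$ for all $\xi$ vanishing on the boundary, which is equivalent to $D_AE=0$ in the interior and hence everywhere by continuity. Since $\mathcal P_{\mathrm{YM}}=\bigsqcup_b\mathcal P_b$, we get $\mathcal C=\bigsqcup_b\mathcal C_b$.

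Finally, for equivariance under the full group $\mathcal G$, I would use $D_{A^g}(g^{-1}Eg)=g^{-1}(D_AE)g$. So $D_AE=0$ implies $D_{A^g}E^g=0$. Combined with $g\cdot\mathcal P_b=\mathcal P_{b^{g_\partial}}$ from Proposition \ref{prop:gaugesectors}, this gives $g\cdot\mathcal C_b=\mathcal C_{b^{g_\partial}}$.

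The main obstacle I expect is the analytic care in the first step. One must check that the pairing of $E$ with $D_A\xi$ is well defined when $E$ is only a covector on $T_AQ^b$ (Remark \ref{boundaryremark}). One must also check that the integration by parts is legitimate on the compactified $\Sigma$, and that $\xi\mapsto\int\operatorname{Tr}(\xi D_AE)$ determines $D_AE$, which is a nondegeneracy statement for the $L^2$ pairing on functions vanishing at $\partial\Sigma$. I would handle these using the fall-off and smoothness assumptions quoted from \cite{Borsboom:2025agn}.
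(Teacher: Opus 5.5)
Your proposal is correct and takes the same route as the paper. On each leaf you use the cotangent-lift momentum map $\langle E, D_A\xi\rangle$, integrated by parts with the boundary term killed by $\xi|_{\partial\Sigma}=0$, and for the last claim you use the covariance $D_{A^g}E^g=g^{-1}(D_AE)g$ together with Proposition \ref{prop:gaugesectors}. You also make explicit steps the paper leaves implicit: that $g_\partial=e$, so leaves are preserved; that $D_A\xi\in T_AQ^b$, so the sectorwise covector $E$ can be paired with it; and the test-function argument identifying the zero locus.
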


\begin{proof}
For \(\xi|_{\partial\Sigma}=0\) (an element of $\mathfrak G^\infty:=\text{Lie}(\mathcal{G}^\infty_0)$, integration by parts of the smeared Gauss constraint $\int_\Sigma \text{Tr }\xi D_AE$ produces no
boundary term, so the ordinary Yang--Mills momentum-map calculation from Section 4.1 of \cite{Borsboom:2025agn}
applies on every leaf. The last claim follows from the gauge covariance of the constraint: $D_{A^g}E^g=g^{-1}(D_AE)g$.\footnote{This is straightforwardly verified:
\begin{align*}
D_{A^g}E^g&=dE^g+[A^g,E^g]=d(g^{-1}Eg)+[g^{-1}Ag+g^{-1}dg,g^{-1}Eg]
\\
&=d(g^{-1})Eg+g^{-1}(dE)g+g^{-1}Edg+g^{-1}[A,E]g+g^{-1}(dg)g^{-1}Eg-g^{-1}Edg
\\
&=g^{-1}(D_AE)g+d(g^{-1})Eg+g^{-1}(dg)g^{-1}Eg
\\
&=g^{-1}(D_AE)g-g^{-1}(dg)g^{-1}Eg+g^{-1}(dg)g^{-1}Eg=g^{-1}(D_AE)g.
\end{align*}}
\end{proof}

Thus the action of the constraint subgroupgroup on each leaf is Hamiltonian, but the sector-moving gauge transformations cannot be so. Recall the definition of momentum map in Poisson geometry \cite{FernandesOrtegaRatiu2009}: if a Lie group \(H\) acts by Poisson diffeomorphisms on a Poisson
manifold \((P,\Pi)\), then the action is called Hamiltonian if there exists
an equivariant momentum map $J:P\longrightarrow\mathfrak h^*$
such that
\[
X_\xi
=
\Pi^\sharp dJ^\xi,
\qquad
J^\xi:=\langle J,\xi\rangle,
\]
for every \(\xi\in\mathfrak h\). But since the image of \(\Pi^\sharp\) is precisely the tangent distribution of
the symplectic foliation, any Hamiltonian vector field $X_f=\Pi^\sharp(df)$ is tangent to the symplectic leaves. Consequently, an action
which moves points between different symplectic leaves cannot possess a momentum map. Thus the sector-moving gauge transformations carry no momentum. 

Crucially, however, the group of sector-preserving gauge transformations, denoted $\mathcal{G}^b=\{g\in \mathcal{G}:g^\partial\cdot b=b\}$, is larger than $\mathcal{G}^\infty_0$. The action of this larger group is still Hamiltonian.

\begin{proposition}\label{prop:sectorboundarymomentum}
The action of \(\mathcal G_b\) on \(\mathcal{P}_b\) has momentum map $\left\langle J_b(A,E),\xi\right\rangle
=
-\int_\Sigma\operatorname{Tr}(E\wedge D_A\xi)$ for $
\xi\in\mathfrak g_b$.
Integration by parts gives
\[
\left\langle J_b(A,E),\xi\right\rangle
=
\int_\Sigma\operatorname{Tr}((D_AE)\xi)
-
\int_{\partial\Sigma}\operatorname{Tr}(E\,\xi_\partial).
\]
Therefore, on the constraint surface \(\mathcal C_b\) we have boundary momentum $J_b^\xi|_{\mathcal C_b}
=
-\int_{\partial\Sigma}\operatorname{Tr}(E\,\xi_\partial)$.
\end{proposition}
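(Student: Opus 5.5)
The strategy is to treat $\mathcal{P}_b\subset T^*Q^b$ as the restriction of a cotangent-lifted action. From there I would verify the momentum map property directly and then integrate by parts.

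\textbf{Step 1: $\mathcal G_b$ acts by cotangent lifts.} By Proposition \ref{prop:gaugesectors}, every $g\in\mathcal G_b$ satisfies $b^{g_\partial}=b$, so $g$ maps $Q^b$ to itself. Its action on $T^*Q^b$, and hence on $\mathcal{P}_b$, is the cotangent lift of this map.

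\textbf{Step 2: the momentum map.} For a cotangent lift, the momentum map is the canonical one, $\langle J_b(A,E),\xi\rangle=\langle E,\xi_Q(A)\rangle$. Here $\xi_Q(A)$ is the infinitesimal generator on $Q^b$, obtained by differentiating $A^{g}=g^{-1}Ag+g^{-1}dg$ along $g=\exp(s\xi)$:
\[
\xi_Q(A)=D_A\xi=d\xi+[A,\xi].
\]
The pairing between $E=*\alpha_A$ and a tangent vector $\omega$ is $\int_\Sigma\operatorname{Tr}(\omega\wedge E)$, up to the sign conventions of the paper. Substituting $\omega=\pm D_A\xi$ gives the stated formula $-\int_\Sigma\operatorname{Tr}(E\wedge D_A\xi)$. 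The overall sign depends on whether one uses left or right actions, and I would fix it to match the convention of \cite{Borsboom:2025agn} (Section 4.1).

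The main obstacle is checking that this pairing is well defined. By Remark \ref{boundaryremark}, $E$ is only a covector on $T_AQ^b$, so I must verify $D_A\xi\in T_AQ^b$, i.e.\ that $D_A\xi$ vanishes at $\partial\Sigma$ in the sense of Lemma \ref{kernellemma}. The pullback of $D_A\xi$ to $\partial\Sigma$ is $D_b\xi_\partial$, and this vanishes precisely because $\xi\in\mathfrak g_b$. The transverse component of $D_A\xi$ need not vanish at the boundary. I would handle this in either of two ways: interpret the tangent condition on $T_AQ^b$ as a condition on the restriction of the connection to $\partial\Sigma$, which is what $r$ records, or note that the pairing $\int_\Sigma\operatorname{Tr}(E\wedge D_A\xi)$ remains finite in the compactified setting because $\hat{E}$ and $D_A\xi$ are smooth up to the boundary. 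This is the step I expect to need the most care. Equivariance of $J_b$ follows from $E^g=g^{-1}Eg$ and $D_{A^g}(g^{-1}\xi g)=g^{-1}(D_A\xi)g$, since the trace is invariant.

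\textbf{Step 3: integration by parts.} Using $d\operatorname{Tr}(E\xi)=\operatorname{Tr}(dE\,\xi)+\operatorname{Tr}(E\wedge d\xi)$ and ad-invariance of the trace to absorb the bracket terms, we get
\[
d\operatorname{Tr}(E\xi)=\operatorname{Tr}((D_AE)\xi)+\operatorname{Tr}(E\wedge D_A\xi).
\]
Stokes' theorem on the compact $\hat\Sigma$ then yields the displayed identity with boundary term $\int_{\partial\Sigma}\operatorname{Tr}(E\,\xi_\partial)$. On $\mathcal C_b$ we have $D_AE=0$, so the bulk term drops out and only the boundary flux term $-\int_{\partial\Sigma}\operatorname{Tr}(E\,\xi_\partial)$ remains. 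This is consistent with Proposition \ref{prop:allsectorGauss}: when $\xi_\partial=0$, $J_b$ reduces to the smeared Gauss law.
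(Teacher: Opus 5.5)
Your proposal is correct and follows the route the paper delegates to \cite{Borsboom:2025agn}: the canonical momentum map $\langle E,\xi_Q(A)\rangle$ of the cotangent-lifted action with $\xi_Q(A)=\pm D_A\xi$ (the sign being a left/right convention), followed by Stokes and the Gauss law. Your first resolution of the pairing issue is the right one, since the paper's $Q^\partial=\mathrm{Conn}(P|_{\partial\Sigma})$ and $r(A^g)=r(A)^{g_\partial}$ force $r$ to be the pullback to $\partial\Sigma$, so $D_A\xi\in T_AQ^b$ is exactly $D_b\xi_\partial=0$; drop the second, because $\hat E$ is in general not smooth up to $\partial\hat\Sigma$ (otherwise it would pair finitely with every smooth variation, contradicting Remark~\ref{boundaryremark}), and for the same reason the Stokes step should be done on the complement of a shrinking collar of $\partial\hat\Sigma$, using only that the pullback of $E$ to the boundary (the flux density) has a limit.
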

\begin{proof}
    See proof of \cite[Proposition 4.2]{Borsboom:2025agn}.
\end{proof}
This just expresses the well-known fact that the allowed boundary gauge transformations carry a momentum equal to the electric flux through the boundary, see also \cite{rielloHamiltonianGaugeTheory2024}. Thus we see that there are three types of gauge transformations: 
\begin{itemize}
    \item Those satisfying $\xi|_{\partial\Sigma}=0$, which are generated by the Gauss constraint and ``trivial'', ``redundant'' or ``unphysical'';
    \item Those which do not vanish on the boundary, $\xi|_{\partial\Sigma}\neq 0$, but are covariantly constant: $D_b\xi|_{\partial\Sigma}=0$. The action of these is Hamiltonian with boundary electric flux momentum;
    \item The sector moving ones, $D_b\xi|_{\partial\Sigma}\neq 0$. These do not carry momentum on $\mathcal{P}_\text{YM}$.
\end{itemize}

We now propose that the usual definition of the physical asymptotic/boundary symmetry group as the quotient of the boundary-preserving transformations by the ``trivial'' transformations should be replaced by:
\begin{definition}\label{def:physicalgaugegroup}
For a gauge theory on a manifold with boundary, the physical gauge subgroup consists of the allowed Hamiltonian
gauge symmetries modulo the trivial gauge symmetries:
\[
\mathcal G_{\mathrm{phys}}
:=
\mathcal{G}_{\mathrm{Ham}}/
     \mathcal G_{\mathrm{triv}},
\]
\end{definition}
We propose this alternative definition because we believe that gauge transformations which carry no corresponding momentum cannot be thought of as physical in any meaningful sense. Boundary transformations can be physical because when two subsystems are compared, one can notice a difference in the momenta of the subsystems:  the momentum is the
quantity through which the transformation of a subsystem can be distinguished by its
interaction with the environment. If there is no momentum then this is impossible.

Before we say more about the full $\mathcal{G}_{\mathrm{Ham}}/
     \mathcal G_{\mathrm{triv}}$ on all of $\mathcal{P}_\text{YM}$, we note the following for the restriction to one sector. The precise form of \(\mathcal G_b/\mathcal G^\infty_0\) depends on the
stabilizer of the boundary connection \(b\). This stabilizer is controlled by
the holonomy of \(b\): for a connection, its gauge stabilizer is the centralizer
of its holonomy group; see \cite{Fleischhack:2000am}. In the case of a trivial holonomy of the boundary configuration $b$, we thus see that the \textit{small} (i.e. connected to the identity) physical gauge transformations in the $b$-sector are just $G$, which reproduces the result of \cite{Borsboom:2025agn}. In the general case, however, global gauge (constant) transformations can have a sector-moving action and be non-Hamiltonian. Define $\mathfrak G_{\mathrm{Ham}}^{\mathrm{tot}}
:=
\left\{
\xi\in\operatorname{Lie}(\mathcal G):
X_\xi\text{ is Hamiltonian on }\mathcal{P}_{\mathrm{YM}}
\right\}.$
Then:
\begin{proposition}\label{prop:totalhamiltoniangroup}
In a boundary trivialization, $\mathfrak G_{\mathrm{Ham}}^{\mathrm{tot}}
/
\mathfrak G^\infty
\cong
\mathfrak z(\mathfrak g)$, where \(\mathfrak z(\mathfrak g)\) is the center of \(\mathfrak g\).
Consequently,
\[
(\mathcal G_{\mathrm{Ham}}^{\mathrm{tot}})_0
/
\mathcal G_0^\infty
\cong
Z(G)_0.
\]
\end{proposition}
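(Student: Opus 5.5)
We first characterize $\mathfrak G_{\mathrm{Ham}}^{\mathrm{tot}}$ pointwise in the sectors, and then pass to groups. By the discussion after Proposition \ref{prop:gaugesectors}, the vector field $X_\xi$ on $\mathcal{P}_{\mathrm{YM}}$ can only be Hamiltonian if it is tangent to every symplectic leaf. This follows because $\operatorname{Im}\Pi^\sharp$ is the tangent distribution of the foliation. By the same discussion, $X_\xi$ is tangent to $\mathcal{P}_b$ iff $D_b\xi_\partial=0$. Hence $\xi\in\mathfrak G_{\mathrm{Ham}}^{\mathrm{tot}}$ forces
\[
D_b\xi_\partial=d\xi_\partial+[b,\xi_\partial]=0\qquad\text{for all }b\in Q^\partial .
\]
Conversely, if this holds for every $b$, then $\xi\in\mathfrak g_b$ for every $b$. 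Proposition \ref{prop:sectorboundarymomentum} then provides a momentum $J_b^\xi$ on each leaf, and these functions glue to a function $J^\xi$ on $\mathcal{P}_{\mathrm{YM}}$ with $\Pi^\sharp dJ^\xi=X_\xi$, since the Poisson structure is defined leafwise. Smoothness across leaves is automatic: the formula $-\int_\Sigma\operatorname{Tr}(E\wedge D_A\xi)$ is a single expression in $(A,E)$. So $\mathfrak G_{\mathrm{Ham}}^{\mathrm{tot}}=\{\xi: D_b\xi_\partial=0\ \forall b\}$.

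Next we solve the boundary condition in a boundary trivialization $P|_{\partial\Sigma}\cong\partial\Sigma\times G$. Here $Q^\partial\cong\Omega^1(\partial\Sigma,\mathfrak g)$ and $\xi_\partial\colon S^2\to\mathfrak g$.
\begin{itemize}
\item Taking $b=0$ gives $d\xi_\partial=0$. Since $S^2$ is connected, $\xi_\partial\equiv X\in\mathfrak g$ is constant.
\item The condition then reduces to $[b,X]=0$ for all $b$. Choosing $b=\eta\,\theta$ for an arbitrary $\eta\in\mathfrak g$ and a fixed nonvanishing-somewhere $1$-form $\theta$ yields $[\eta,X]=0$ for all $\eta$, so $X\in\mathfrak z(\mathfrak g)$.
\item Conversely, any $\xi$ whose boundary value is a constant central $X$ satisfies $D_b\xi_\partial=0$ for every $b$.
\end{itemize}
The map $\xi\mapsto\xi_\partial$ is a Lie algebra homomorphism $\mathfrak G_{\mathrm{Ham}}^{\mathrm{tot}}\to\mathfrak z(\mathfrak g)$. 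It is surjective: extend a constant central $X$ into the bulk using a cutoff function equal to $1$ near $\partial\Sigma$, which is possible because $r$ is a split submersion. Its kernel is exactly $\{\xi:\xi|_{\partial\Sigma}=0\}=\mathfrak G^\infty$. This gives the Lie-algebra isomorphism.

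For the group statement, we take $(\mathcal G_{\mathrm{Ham}}^{\mathrm{tot}})_0$ to be the connected subgroup generated by $\exp$ of $\mathfrak G_{\mathrm{Ham}}^{\mathrm{tot}}$. Boundary restriction $g\mapsto g|_{\partial\Sigma}$ maps it into constant maps valued in $Z(G)_0$, because it is generated by exponentials of constant central boundary values. It is onto $Z(G)_0$, since $Z(G)_0=\exp\mathfrak z(\mathfrak g)$ for compact $G$. The kernel consists of elements that are trivial on the boundary and lie in the identity component. We identify this kernel with $\mathcal G_0^\infty$ by noting that a path in $(\mathcal G_{\mathrm{Ham}}^{\mathrm{tot}})_0$ from the identity, composed with a correction by the path of its central boundary values (extended into the bulk by the same cutoff), yields a path within boundary-trivial transformations.

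We expect this last identification of the kernel with $\mathcal G_0^\infty$ to be the main obstacle, since it is a component and topology question about the infinite-dimensional group. The fallback is to state the group isomorphism at the level of identity components via the Lie algebra isomorphism, relying on the paper's standing smoothness assumptions.
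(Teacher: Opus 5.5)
Your proposal is correct and follows the paper's argument: tangency to every leaf forces $D_b\xi_\partial=0$ for all $b$, then $b=0$ gives constancy and arbitrary $b$ gives centrality, and the converse holds because central constants fix every $b$. You also make explicit what the paper leaves implicit, namely the gluing of the leafwise momenta $J_b^\xi$, the surjectivity and kernel of $\xi\mapsto\xi_\partial$, and the group-level step that the paper simply asserts with ``Consequently''; the kernel step you flag closes at once if you extend the central boundary path $z(t)=g(t)|_{\partial\Sigma}$ into the bulk as the \emph{constant} gauge transformation $z(t)$ (well defined since central elements are $\mathrm{Ad}$-invariant) rather than by a cutoff, because $g(t)z(t)^{-1}$ is then a path in $\mathcal G^\infty$ from $1$ to $g$, and note that $\exp\mathfrak z(\mathfrak g)=Z(G)_0$ requires $G$ connected, which the paper also tacitly assumes.
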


\begin{proof}
A globally Hamiltonian infinitesimal transformation must be tangent to
every leaf. Hence $D_b\xi_\partial=0$ for every boundary connection \(b\in Q^\partial\). Taking \(b=0\) (in the trivialization) gives
\(d\xi_\partial=0\), so \(\xi_\partial\) is constant. Allowing arbitrary
\(b\) then gives $[b,\xi_\partial]=0$
for every \(b\). Therefore
\(\xi_\partial\in\mathfrak z(\mathfrak g)\). The converse follows
because a constant central boundary parameter preserves every
boundary connection.
\end{proof}
For semisimple structure group \(G\), this leaves no nontrivial connected physical gauge group on the total all-sector phase space. For $G=U(1)$, however, we obtain the familiar result that $\mathcal{G}_\text{phys}\cong U(1)$ \textit{across all sectors}. Thus, this is a genuine generalization of \cite{Borsboom:2025agn}, and we see that the global gauge group is again singled out as the physical gauge group of electromagnetism because it is sector-preserving but does not vanish on the boundary. This agrees with \cite{gomesQuasilocalDegreesFreedom2021,gomesHolismEmpiricalSignificance2021}. It should be noted that whether this quotient acts nontrivially as a
physical symmetry does depend on the presence of charged matter or other degrees
of freedom with respect to which the corresponding charge can be detected.

For non-Abelian Yang-Mills theory, if we work in a boundary trivialization and allow only $\text{Ad}(G)$-invariant boundary configurations $b$, then we also find $\mathcal{G}_\text{phys}\cong G$, even on the space of all $\text{Ad}(G)$-invariant boundary sectors. This is because the proof of Proposition \ref{prop:totalhamiltoniangroup} then no longer applies because the condition $[b,\xi_\partial]=0$ becomes automatic, and does not hold for arbitrary $b$. Thus the constant $\xi_\partial$ need not lie in the center of $\mathfrak g$. This agrees with \cite{Borsboom:2025agn}, where the physical gauge group for one $\text{Ad}$-invariant sector was shown to be $G$, though admittedly the restriction to $\text{Ad}$-invariant configurations is quite stringent. 

We end this Section with a comparison of our results to \cite{rielloHamiltonianGaugeTheory2024}. The first important point to note is that their sectors are not the same as ours: their so-called \textit{flux superselection sectors} are the symplectic leaves of the \textit{reduced} phase space when the total phase space $T^\vee\text{Conn}(P)$ is quotiented by the constraint group (our $\mathcal{G}^\infty_0$). Their starting phase space is large enough for the full boundary gauge action to admit a local momentum map, whose boundary part is the flux map. In our phase space the boundary-configuration directions have been removed, so transformations changing \(b\) become leaf-moving Poisson
automorphisms and no longer have a classical momentum map on \(\mathcal{P}_{\mathrm{YM}}\). If one fixes \(b\) in our sense, then the Riello--Schiavina flux analysis has a restricted analogue, with the full boundary gauge group replaced by
the stabilizer \(\mathcal G_b/\mathcal G^\infty_0\).

\begin{remark}
In this Section we worked in temporal gauge. Without temporal gauge, finite energy constrains $u=\dot A-D_AA_0$
rather than \(\dot A\). The condition \(u|_{\partial\Sigma}=0\)
then implies $\dot b=D_b(A_0|_{\partial\Sigma})$
for the boundary configuration $b$. Thus the boundary connection may evolve within its boundary gauge
orbit. A fully gauge-covariant treatment would therefore not use the vector bundle
\(D=\ker(Tr)\to Q\). For fixed \(A_0|_{\partial\Sigma}\), the admissible
velocities at \(A\) form the affine space
\[
\{\alpha_A\in T_AQ:T_A r(\alpha_A)=D_{r(A)} A_0|_{\partial\Sigma}\},
\]
modeled on \(\ker(T_A r)\). Thus the Legendre transform would have to be
reformulated for affine admissible-velocity spaces rather than for the vector
bundle used in Sections \ref{sec:BCs} and \ref{sec:legendre}.
\end{remark}

\section{Conclusion}

We have shown that instantaneous boundary conditions can force the admissible state space to be not a single tangent bundle, but a disjoint union of tangent bundles over fixed boundary sectors. The corresponding Legendre transform is then naturally defined on the admissible velocity bundle \(D=\ker(Tr)\), with values in \(D^*\). This yields a sectored Poisson phase space whose symplectic leaves are the fixed-boundary phase spaces.

In Yang--Mills theory this structure explains why transformations changing the boundary connection act between Poisson leaves rather than as Hamiltonian transformations on a single leaf. This suggests that physical boundary gauge transformations should be understood as Hamiltonian gauge transformations modulo trivial ones. In temporal gauge, the Gauss constraint acts sectorwise, while non-trivial sector-preserving boundary transformations carry the usual electric-flux momentum.

Several questions remain open. One should investigate whether analogous Legendre transforms can be defined for more general boundary conditions, including genuinely affine or gauge-covariant ones. The application to gravity \cite{Regge:1974zd}, and especially to null boundaries \cite{Ashtekar:1978zz,Ashtekar:1981bq}, is an important next step. The rigorous relation to the covariant phase space formalism \cite{Crnkovic:1986ex,Lee:1990nz} must also be understood. Finally, edge modes suggest a complementary route: by enlarging the phase space with boundary-conjugate variables \cite{donnellyLocalSubsystemsGauge2016a,rielloEdgeModesEdge2021}, sector-moving transformations may again become Hamiltonian.

\subsection*{Acknowledgments}

The author wants to thank Hessel Posthuma for useful discussions, Klaas Landsman for feedback and Manus Visser for advice. This work is supported by the Spinoza Grant of the Dutch Science Organization (NWO) awarded to N.P. (Klaas) Landsman.

\bibliographystyle{ieeetr} 
\bibliography{bibliography.bib}

\end{document}